\documentclass[journal,onecolumn]{IEEEtran}
\usepackage{amsmath,amsfonts,amssymb,mathrsfs}
\usepackage{array}
\usepackage{cite}
\usepackage{url}
\newtheorem{theorem}{Theorem}
\newtheorem{lemma}{Lemma}
\newtheorem{proposition}{Proposition}
\newtheorem{corollary}{Corollary}

\newtheorem{remark}{Remark}
\newenvironment{proof}[1][Proof]{\noindent\emph{#1: }\ignorespaces}{\hfill$\IEEEQED$\par}

\DeclareMathOperator{\wt}{wt}
\DeclareMathOperator{\Supp}{Supp}
\DeclareMathOperator{\Aff}{Aff}
\DeclareMathOperator{\Span}{span}
\DeclareMathOperator{\RM}{RM}

\DeclareMathOperator{\ev}{ev}

\newcommand{\Fq}{\mathbb F_q}
\newcommand{\Fqm}{\mathbb F_{q^m}}
\newcommand{\PP}{\mathbb P}

\newcommand{\calG}{\mathcal G}

\newcommand{\calI}{\mathcal I}

\newcommand{\calM}{\mathcal M}
\newcommand{\calR}{\mathcal R}
\newcommand{\bfzero}{\mathbf 0}

\begin{document}

\title{Intermediate Constacyclic Codes and Scalar-Residue Reed--Muller Layers}

\author{Yaoran~Yang,~\IEEEmembership{Member,~IEEE,}
        and~Yutong~Zhang,~\IEEEmembership{Member,~IEEE}%
\thanks{Manuscript received Month Day, 2026; revised Month Day, 2026.}
\thanks{Yaoran Yang and Yutong Zhang are with the School of Mathematics,
Sichuan University, Chengdu 610065, China
(e-mail: yangyaoran@stu.scu.edu.cn; yutongzhang@stu.scu.edu.cn).}
\thanks{Corresponding author: Yutong Zhang.}}

\markboth{IEEE Transactions on Information Theory,~Vol.~XX, No.~XX, Month~2026}%
{Yang and Zhang: Constacyclic Codes and Reed--Muller Residue Layers}

\maketitle

\begin{abstract}
A 2024 paper of Sun, Ding and Wang introduced a second class of
constacyclic codes over finite fields, denoted $C(q,m,r,\ell)$, with length
$(q^m-1)/r$, where $r\mid(q-1)$ and the defining monomials have total
$q$-ary degree congruent to $r-1$ modulo $r$.  In the non-projective
intermediate range $2<r<q-1$ the paper gave a sharp-looking upper bound
and a BCH-type lower bound, and left the minimum distance open.  We prove that the upper bound is the exact minimum distance in the
non-terminal part of the intermediate range, and we record the terminal
endpoint separately.  More precisely, if
$\ell=(q-1)a+b<(q-1)m-1$, $0\le b\le q-2$, and
$b\equiv r-1\pmod r$, then, for every prime power $q$, every divisor
$r$ of $q-1$ with $2<r<q-1$, and every $m\ge2$,
\[
  d(C(q,m,r,\ell))=
  \begin{cases}
  \displaystyle \frac{q-1}{r}(q-b+1)q^{m-a-2},&0\le a\le m-2,\\[1mm]
  \displaystyle \frac{q-b+r-2}{r},&a=m-1.
  \end{cases}
\]
In the second line the condition $\ell<(q-1)m-1$ forces
$b\le q-3$.  The first line settles the open problem of Sun, Ding and
Wang in the non-terminal range; the second line is the terminal case
known from their terminal estimate.  We also determine the minimum affine
support of the top residue-matched non-terminal scalar-residue layers of
generalized Reed--Muller codes.  The resulting dichotomy says that, when
the top degree residue is fixed, the first Reed--Muller weight survives
exactly for residue classes $0$ and $1$, while every other residue-matched
layer starts at the second Reed--Muller weight.
\end{abstract}

\begin{IEEEkeywords}
Affine Reed--Muller codes, constacyclic codes, finite fields, minimum
distance, projective Reed--Muller codes, scalar orbits, residue layers.
\end{IEEEkeywords}

\section{Introduction}
\IEEEPARstart{C}{onstacyclic} codes form one of the basic algebraic
families between cyclic codes and general linear codes.  Let $q$ be a
prime power, let $r\mid(q-1)$, let $m\ge2$, and put
\begin{equation}
        n=\frac{q^m-1}{r}.                                      \label{eq:n}
\end{equation}
Sun, Ding and Wang~\cite{SunDingWang2024} constructed, among other
families, a class of
$\lambda$-constacyclic codes of length $n$ over $\Fq$, denoted
$C(q,m,r,\ell)$, for degrees
\begin{equation}
        r-1\le \ell < (q-1)m-1,
        \qquad       \ell\equiv r-1\pmod r .                   \label{eq:ellrange}
\end{equation}
The construction is naturally described by evaluating polynomials whose
reduced monomials have total degree at most $\ell$ and total degree
congruent to $r-1$ modulo $r$.  For $r=q-1$ this family is essentially a
projective Reed--Muller family; for $r=2$ several coincidences also make
the minimum distance accessible.  The genuinely intermediate regime is
\begin{equation}
        2<r<q-1.                                                \label{eq:intermediate}
\end{equation}
For background on constacyclic codes over finite fields and standard
coding-theoretic notation, see
\cite{HuffmanPless,MacWilliamsSloane,ChenFanLinLiu2012,DinhLopezPermouth2004,SharmaRani2016,LidlNiederreiter}.

In that regime Sun, Ding and Wang proved the two-sided estimate
\begin{IEEEeqnarray}{rCl}
 d(C(q,m,r,\ell))
 &\ge& \displaystyle \frac{(q-b)q^{m-a-1}-2}{r}+1,\nonumber\\
 d(C(q,m,r,\ell))
 &\le& \displaystyle \frac{q-1}{r}(q-b+1)q^{m-a-2}.               \label{eq:SDWbound}
\end{IEEEeqnarray}
when $\ell=(q-1)a+b$, $0\le a\le m-2$, $0\le b\le q-2$, and
$b\equiv r-1\pmod r$.  They asked to determine the minimum distance, or
at least improve the lower bound, for
\begin{equation}
        \ell=rL+r-1,
        \qquad 1\le L\le \frac{q-1}{r}m-3 .                     \label{eq:openrange}
\end{equation}
The upper bound in \eqref{eq:SDWbound} has the form of a projective
pencil weight.  The lower bound, by contrast, is essentially BCH-like and
has the scale of the first affine Reed--Muller weight divided by $r$.
This paper proves that the upper bound is exact throughout the
non-terminal part of \eqref{eq:openrange}; the possible terminal endpoint
is governed by the separate terminal formula recorded below.  Thus, in
the non-terminal range, the right answer is the second affine
Reed--Muller weight divided by the scalar orbit length $r$.

The result may be summarized as follows.  Write every admissible degree
in the unique form
\begin{equation}
        \ell=(q-1)a+b,
        \,\quad 0\le b\le q-2,
        \,\quad b\equiv r-1\pmod r.                             \label{eq:abform}
\end{equation}
Then the non-terminal distances are
\begin{IEEEeqnarray}{rCl}
        d(C(q,m,r,\ell))
        &=& \displaystyle \frac{q-1}{r}(q-b+1)q^{m-a-2},\nonumber\\
        && 0\le a\le m-2 .                                      \label{eq:mainintro}
\end{IEEEeqnarray}
and the terminal distances are
\begin{equation}
        d(C(q,m,r,\ell))=\frac{q-b+r-2}{r},
        \qquad a=m-1.                                           \label{eq:terminalintro}
\end{equation}
For parameters in \eqref{eq:openrange}, the terminal case is either
outside the stated open range or covered by the existing BCH argument;
the new content is \eqref{eq:mainintro}.  In particular, the gap in
\eqref{eq:SDWbound} closes without changing the upper construction.

A recent square-root-like lower-bound construction of Chen, Sun, Xie,
Chen and Ding targets different constacyclic and negacyclic defining sets
and asks for minimum-distance improvements for those new classes
\cite{ChenSunXieChenDing2024}.  It does not determine the scalar-residue
subcode \(C(q,m,r,\ell)\) treated here.

\subsection{Idea of the Proof}
Let $\calM(q,m,r,\ell)$ be the reduced monomial space generated by
monomials
\begin{IEEEeqnarray}{c}
        X_1^{i_1}\cdots X_m^{i_m},\qquad
        0\le i_j\le q-1,\nonumber\\
        \sum_{j=1}^m i_j\le\ell,
        \qquad
        \sum_{j=1}^m i_j\equiv r-1\pmod r .                    \label{eq:Mconditions}
\end{IEEEeqnarray}
The crucial point is the scalar law
\begin{equation}
        f(\alpha P)=\alpha^{r-1}f(P)=\alpha^{-1}f(P)
        \qquad(\alpha^r=1),                                    \label{eq:scalarlawintro}
\end{equation}
valid for all $f\in\calM(q,m,r,\ell)$ and $P\in\Fq^m$.  Hence
$\Supp(f)$ is a union of the nonzero scalar orbits of the group
\begin{equation}
        \mu_r=\{\alpha\in\Fq^\ast:\alpha^r=1\},
        \qquad |\mu_r|=r,                                      \label{eq:mur}
\end{equation}
and the length-$n$ codeword has weight
\begin{equation}
        \wt(c_f)= \frac{|\Supp(f)|}{r}.                         \label{eq:wtdivideintro}
\end{equation}
Since $\calM(q,m,r,\ell)$ is contained in the affine generalized
Reed--Muller space of degree $\ell$, any nonzero $f$ has affine support
at least equal to the first Reed--Muller weight unless it is forced into
a higher weight stratum.  The first affine Reed--Muller supports of
order $\ell=(q-1)a+b$ have cardinality
\begin{equation}
        W_1=(q-b)q^{m-a-1},                                    \label{eq:W1intro}
\end{equation}
and are affine cylinders over an affine line with exactly $b$ deleted
parallel hyperplanes.  Such a support cannot satisfy
\eqref{eq:scalarlawintro} when $2<r<q-1$ and
$b\equiv r-1\pmod r$: scalar invariance forces the deleted set on the
line to contain $0$ and then to be a union of $\mu_r$-orbits, so its
cardinality is $1$ modulo $r$, not $r-1$ modulo $r$.  Therefore the
support must have size at least the second affine Reed--Muller weight
\begin{equation}
        W_2=(q-1)(q-b+1)q^{m-a-2}.                             \label{eq:W2intro}
\end{equation}
The upper construction is the homogeneous pencil
\begin{equation}
        F_{a,b}(X)=
        \prod_{i=1}^a (1-X_i^{q-1})
        \prod_{j=1}^b (X_{a+1}-\theta_jX_{a+2}),               \label{eq:pencilintro}
\end{equation}
where $\theta_1,\ldots,\theta_b$ are distinct elements of $\Fq$.  Every
monomial of \eqref{eq:pencilintro} has total degree congruent to $b$,
hence to $r-1$, modulo $r$.  Its support consists of
$q^{m-a-2}$ copies of $q+1-b$ projective directions, each contributing
$q-1$ nonzero affine points, and therefore has cardinality $W_2$.
Dividing by $r$ gives \eqref{eq:mainintro}.

\subsection{Organization}
Section~II fixes the code model and recalls the affine Reed--Muller facts
used in the sequel.  Section~III proves the orbit-weight identity.
Section~IV contains the obstruction to first-weight supports.  Section~V
gives the sharp construction and states the exact distance.  Section~VI
records the terminal case.  Sections~VII--IX give refinements and checks.
Section~X proves the stronger residue-layer dichotomy and its quotient-code
consequence.  The remaining sections record examples, boundary cases, and
numerical consistency checks.  No appendix is used; all details are included
in the body.

\section{Preliminaries}
\subsection{Finite-Field and Evaluation Notation}
Let $\Fq$ be the field of order $q$.  A polynomial in
$\Fq[X_1,\ldots,X_m]$ will always be represented by its reduced residue
modulo the ideal
\begin{equation}
        \calI_q=(X_1^q-X_1,\ldots,X_m^q-X_m).                  \label{eq:ideal}
\end{equation}
Thus every function $\Fq^m\to\Fq$ is represented uniquely as
\begin{equation}
        f(X)=\sum_{0\le i_1,\ldots,i_m\le q-1}
        c_i X_1^{i_1}\cdots X_m^{i_m},                         \label{eq:reduced}
\end{equation}
where the exponent $q-1$ is allowed.  Its reduced degree is
\begin{equation}
        \deg_q(f)=\max\Bigl\{\sum_{j=1}^m i_j:c_i\ne0\Bigr\}.   \label{eq:reduceddegree}
\end{equation}
For $0\le s\le m(q-1)$ the affine generalized Reed--Muller code is
\begin{equation}
        \RM_q(s,m)=\{(f(P))_{P\in\Fq^m}: \deg_q(f)\le s\}.      \label{eq:RMdef}
\end{equation}
If
\begin{equation}
        s=(q-1)a+b,
        \quad 0\le a\le m,
        \quad 0\le b\le q-2,                                  \label{eq:sab}
\end{equation}
with the usual convention that $a=m$ only when $b=0$, then the minimum
distance of $\RM_q(s,m)$ is
\begin{equation}
        d_1(s,m)=(q-b)q^{m-a-1}                               \label{eq:RMfirst}
\end{equation}
for $a\le m-1$.  In the non-terminal range $a\le m-2$ and
$2\le b\le q-2$, the next nonzero weight after the minimum weight is
\begin{equation}
        d_2(s,m)=(q-1)(q-b+1)q^{m-a-2}.                       \label{eq:RMsecond}
\end{equation}
The first-weight classification and the second-weight formula are standard;
we use them in the forms of
\cite{DelsarteGoethalsMacWilliams1970,KasamiTokuraAzumi1976,KasamiTokura1970,HeijnenPellikaan1998,Rolland2010,Leducq2013}.
The cases $b=0$ and $b=1$ require separate statements.  In the present
paper $b\equiv r-1\pmod r$ and $r>2$, so
\begin{equation}
        2\le b\le q-2,                                        \label{eq:bge2}
\end{equation}
and \eqref{eq:RMsecond} is the relevant formula.

\subsection{The Second Class of Constacyclic Codes}
Fix $r\mid(q-1)$ and $r>1$.  Let $\lambda\in\Fq^\ast$ have order $r$.
Choose a primitive element $\beta$ of $\Fqm$ satisfying $\beta^n=\lambda$
with $n=(q^m-1)/r$, and let $M$ be the companion matrix for multiplication
by $\beta$ in a fixed $\Fq$-basis of $\Fqm$.  If
$e=(1,0,\ldots,0)$, then
\begin{equation}
        \Fq^m\setminus\{\bfzero\}
        =\{eM^i:0\le i\le q^m-2\},                            \label{eq:affineorbit}
\end{equation}
and
\begin{equation}
        M^n=\lambda I_m.                                      \label{eq:Mn}
\end{equation}
For an admissible $\ell$ in \eqref{eq:ellrange}, define
\begin{IEEEeqnarray}{rCl}
\calM(q,m,r,\ell)
 &=&\Span_{\Fq}\{X_1^{i_1}\cdots X_m^{i_m}:\nonumber\\
 && 0\le i_j\le q-1,
        \quad i_1+\cdots+i_m\le \ell,\nonumber\\
 && i_1+\cdots+i_m\equiv r-1\pmod r\}.                       \label{eq:Mdef}
\end{IEEEeqnarray}
The evaluation model of Sun--Ding--Wang~\cite{SunDingWang2024} is
\begin{IEEEeqnarray}{rCl}
        \calG(q,m,r,\ell)
        &=&\{c_f=(f(e),f(eM),\ldots,f(eM^{n-1})):\nonumber\\
        &&\hspace{8mm} f\in\calM(q,m,r,\ell)\}.                 \label{eq:Gdef}
\end{IEEEeqnarray}
They prove that
\begin{equation}
        \calG(q,m,r,\ell)=C(q,m,r,\ell).                      \label{eq:GCequalsC}
\end{equation}
Consequently, the problem of determining $d(C(q,m,r,\ell))$ is equivalent
to a restricted affine support problem.

\begin{remark}
The use of $eM^i$ in \eqref{eq:Gdef} chooses one representative from each
orbit of multiplication by $\lambda$ on
$\Fq^m\setminus\{\bfzero\}$.  Since $\lambda$ has order $r$, the full
nonzero affine space decomposes as
\begin{equation}
        \Fq^m\setminus\{\bfzero\}
        =\coprod_{i=0}^{n-1}\{\lambda^j eM^i:0\le j\le r-1\}.  \label{eq:orbitdecomp}
\end{equation}
This decomposition is the bridge between the constacyclic length $n$ and
the affine Reed--Muller length $q^m$.
\end{remark}

\subsection{Earlier Bounds in the Same Parameters}
Put
\begin{equation}
        \ell=(q-1)a+b,
        \qquad 0\le b\le q-2,
        \qquad b\equiv r-1\pmod r .                           \label{eq:ellab}
\end{equation}
The previously known estimate in the intermediate non-terminal range was
\cite[Th. 47]{SunDingWang2024}
\begin{IEEEeqnarray}{rCl}
        d(C(q,m,r,\ell))
        &\ge& \displaystyle \frac{(q-b)q^{m-a-1}-2}{r}+1,\nonumber\\
        d(C(q,m,r,\ell))
        &\le& \displaystyle \frac{q-1}{r}(q-b+1)q^{m-a-2}.       \label{eq:oldbound}
\end{IEEEeqnarray}
The quotient between the two main terms is approximately
\begin{equation}
        \frac{(q-1)(q-b+1)}{q(q-b)}.                            \label{eq:ratio}
\end{equation}
This number is close to one for small $b$, but the exact distance is a
structural question: the first affine Reed--Muller layer either survives
the congruence restriction or it does not.  The rest of the paper proves
that it never survives when $2<r<q-1$.

\section{Scalar Orbits and Weight Transfer}
\subsection{Homogeneity Modulo $r$}
Let
\begin{equation}
        \mu_r=\{\alpha\in\Fq^\ast:\alpha^r=1\}.                \label{eq:mur2}
\end{equation}
For $f\in\calM(q,m,r,\ell)$ and $\alpha\in\mu_r$, every monomial in
$f$ satisfies
\begin{IEEEeqnarray}{rCl}
        X^i(\alpha P)&=&\alpha^{|i|}X^i(P),\nonumber\\
        |i|&=&i_1+\cdots+i_m,
        \quad |i|\equiv r-1\pmod r .                           \label{eq:monlaw}
\end{IEEEeqnarray}
so
\begin{equation}
        f(\alpha P)=\alpha^{r-1}f(P)=\alpha^{-1}f(P).          \label{eq:scalar-law}
\end{equation}
In particular,
\begin{equation}
        f(P)=0 \quad\Longleftrightarrow\quad f(\alpha P)=0.    \label{eq:zero-orbits}
\end{equation}
Because all monomials have positive degree, $f(\bfzero)=0$.  Hence
\begin{equation}
        \Supp(f)\subseteq \Fq^m\setminus\{\bfzero\}.            \label{eq:zerosat0}
\end{equation}

\begin{lemma}[Orbit weight identity]\label{lem:orbitweight}
For every $f\in\calM(q,m,r,\ell)$,
\begin{equation}
        \wt(c_f)=\frac{|\Supp(f)|}{r},                          \label{eq:orbitweight}
\end{equation}
where $c_f$ is the length-$n$ word in \eqref{eq:Gdef} and
$\Supp(f)=\{P\in\Fq^m:f(P)\ne0\}$.
\end{lemma}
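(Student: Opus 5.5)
The plan is to count the support of $f$ orbit by orbit, using the decomposition \eqref{eq:orbitdecomp} and the scalar law \eqref{eq:scalar-law}.

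First I check that $\lambda$ generates $\mu_r$. Since $\lambda\in\Fq^\ast$ has order $r$, the powers $\lambda^0,\ldots,\lambda^{r-1}$ are $r$ distinct roots of $X^r-1$, which has at most $r$ roots, so $\{\lambda^j\}=\mu_r$.

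Next I establish that the pieces in \eqref{eq:orbitdecomp} are disjoint, each of size $r$, and that the representatives $eM^i$, $0\le i\le n-1$, lie in distinct pieces. By \eqref{eq:affineorbit} the vectors $eM^k$, $0\le k\le q^m-2$, are pairwise distinct. Writing $k=jn+i$ with $0\le i\le n-1$, $0\le j\le r-1$, and using \eqref{eq:Mn}, we get
\[
eM^{k}=eM^{i}(\lambda I_m)^{j}=\lambda^{j}eM^{i}.
\]
So the map $(i,j)\mapsto\lambda^j eM^i$ is a bijection from $\{0,\ldots,n-1\}\times\{0,\ldots,r-1\}$ onto $\Fq^m\setminus\{\bfzero\}$. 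This gives the disjoint union \eqref{eq:orbitdecomp}, with each orbit $O_i=\{\lambda^jeM^i\}=\mu_r\, eM^i$ of size exactly $r$. This is the only step needing care, and it reduces to the counting $nr=q^m-1$ together with \eqref{eq:Mn}.

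Then, by \eqref{eq:zero-orbits}, $f$ either vanishes identically on an orbit $O_i$ or is nonzero at every point of it. The coordinate $f(eM^i)$ of $c_f$ is nonzero exactly when $O_i\subseteq\Supp(f)$. By \eqref{eq:zerosat0}, $\Supp(f)\subseteq\Fq^m\setminus\{\bfzero\}$, so
\[
|\Supp(f)|=\sum_{i:\,f(eM^i)\ne0}|O_i|=r\,\wt(c_f).
\]
Dividing by $r$ gives \eqref{eq:orbitweight}.
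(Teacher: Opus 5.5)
Your proof is correct and follows the same route as the paper. You partition the nonzero points into the $\mu_r$-orbits $\{\lambda^j eM^i\}$, use the scalar law to see that each orbit lies entirely inside or entirely outside $\Supp(f)$, and count $r$ support points per nonzero coordinate; the only additions are that you verify the decomposition \eqref{eq:orbitdecomp} from $M^n=\lambda I_m$ and $nr=q^m-1$, and you cite $0\notin\Supp(f)$ explicitly, both of which the paper takes for granted.
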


\begin{proof}
The nonzero affine points are partitioned as in \eqref{eq:orbitdecomp}.
By \eqref{eq:zero-orbits}, either the entire orbit
$\{\lambda^j eM^i:0\le j<r\}$ is contained in $\Supp(f)$ or the entire
orbit is disjoint from $\Supp(f)$.  The coordinate $f(eM^i)$ of $c_f$
is nonzero exactly in the first case.  Thus every nonzero coordinate of
$c_f$ accounts for exactly $r$ affine support points, and
\eqref{eq:orbitweight} follows.
\end{proof}

The identity \eqref{eq:orbitweight} will be used in the following form.
Let
\begin{equation}
        \delta_{\rm aff}(q,m,r,\ell)=
        \min\{|\Supp(f)|:0\ne f\in\calM(q,m,r,\ell)\}.          \label{eq:deltaaff}
\end{equation}
Then
\begin{equation}
        d(C(q,m,r,\ell))=\frac{1}{r}\delta_{\rm aff}(q,m,r,\ell).\label{eq:d-delta}
\end{equation}
Since
\begin{IEEEeqnarray}{rCl}
        \calM(q,m,r,\ell)
        &\subseteq&
        \{f\in\Fq[X_1,\ldots,X_m]/\calI_q:\nonumber\\
        && \deg_q(f)\le\ell,
        \quad f(0)=0\}.                                        \label{eq:MinsideRM}
\end{IEEEeqnarray}
we always have
\begin{equation}
        \delta_{\rm aff}(q,m,r,\ell)\ge d_1(\ell,m).           \label{eq:trivialRM}
\end{equation}
The task is to improve \eqref{eq:trivialRM} by one weight level.

\subsection{A Useful Reformulation}
Let $G=\Fq^\ast/\mu_r$.  For a nonzero point $P$ write
\begin{equation}
        [P]_r=\{\alpha P:\alpha\in\mu_r\}.                     \label{eq:rorbit}
\end{equation}
The support of any nonzero word is a set of such $r$-orbits.  If
$\pi_r:\Fq^m\setminus\{0\}\to(\Fq^m\setminus\{0\})/\mu_r$ denotes the
quotient map, then
\begin{equation}
        \wt(c_f)=|\pi_r(\Supp(f))|.                            \label{eq:quotientweight}
\end{equation}
For a homogeneous polynomial $h$ of degree $d\equiv r-1\pmod r$,
\begin{equation}
        h(\alpha P)=\alpha^{-1}h(P),\qquad \alpha\in\mu_r,       \label{eq:homquot}
\end{equation}
and the same quotient formula applies.  The nonhomogeneous factors
$1-X_i^{q-1}$ in \eqref{eq:pencilintro} are also invariant under
$\Fq^\ast$, because
\begin{equation}
        (\alpha X_i)^{q-1}=X_i^{q-1}
        \qquad(\alpha\in\Fq^\ast).                             \label{eq:indicatorinvariant}
\end{equation}
Thus the construction in \eqref{eq:pencilintro} is perfectly compatible
with \eqref{eq:scalar-law}.

\section{Excluding the First Reed--Muller Layer}
\subsection{Minimum Supports of Affine Reed--Muller Codes}
We use the standard classification of minimum weight words in generalized
affine Reed--Muller codes.  If
$s=(q-1)a+b$, $0\le a\le m-1$, $0\le b\le q-2$, and
$0\ne f\in\RM_q(s,m)$ has support size $d_1(s,m)$, then after an affine
change of variables
\begin{equation}
        f(X)=\gamma
        \prod_{i=1}^{a}\bigl(1-(L_i(X)-u_i)^{q-1}\bigr)
        \prod_{j=1}^{b}(L_{a+1}(X)-v_j),                      \label{eq:RMminshape}
\end{equation}
where $\gamma\in\Fq^\ast$, the affine forms
$L_1,\ldots,L_{a+1}$ have independent linear parts, and
$v_1,\ldots,v_b$ are distinct elements of $\Fq$.  Its support is
\begin{IEEEeqnarray}{rCl}
S &=& \{P\in\Fq^m:L_i(P)=u_i\ (1\le i\le a),\nonumber\\
  &&\hspace{7mm} L_{a+1}(P)\notin\{v_1,\ldots,v_b\}\}\nonumber\\
  &=& A\setminus\bigcup_{j=1}^{b} H_j .                        \label{eq:RMminsupp}
\end{IEEEeqnarray}
where $A$ is an affine subspace of dimension
\begin{equation}
        t=m-a,                                                   \label{eq:tdef}
\end{equation}
and $H_1,\ldots,H_b$ are parallel affine hyperplanes of $A$.  Hence
\begin{equation}
        |S|=(q-b)q^{t-1}=d_1(s,m).                              \label{eq:Ssize}
\end{equation}

In our parameters $b\ge2$ by \eqref{eq:bge2}.  If $a\le m-2$, then
$t\ge2$, which is exactly the range where the second-weight value
\eqref{eq:RMsecond} is distinct from the first.

\subsection{Scalar-Invariant First Supports}
Suppose, for contradiction, that
\begin{equation}
        0\ne f\in\calM(q,m,r,\ell),
        \qquad |\Supp(f)|=d_1(\ell,m).                         \label{eq:firstcontra}
\end{equation}
Put $S=\Supp(f)$.  Then $S$ has the form \eqref{eq:RMminsupp}.  Since
$f(0)=0$, one has
\begin{equation}
        0\notin S.                                             \label{eq:0notS}
\end{equation}
Since $S$ is a union of scalar orbits under $\mu_r$,
\begin{equation}
        \alpha S=S,
        \qquad \alpha\in\mu_r.                                 \label{eq:S-invariant}
\end{equation}
We show that \eqref{eq:RMminsupp}, \eqref{eq:0notS}, and
\eqref{eq:S-invariant} cannot hold when $b\equiv r-1\pmod r$ and
$r>2$.

\begin{lemma}[Linearization of the affine cylinder]\label{lem:linearization}
Let $S=A\setminus\bigcup_{j=1}^b H_j\subseteq\Fq^m$ be of the form
\eqref{eq:RMminsupp}, with $b\le q-2$ and $\dim A=t\ge2$.  If
$0\notin S$ and $\alpha S=S$ for some $\alpha\in\Fq^\ast$ with
$\alpha\ne1$, then $A$ is a linear subspace and the family
$\{H_1,\ldots,H_b\}$ is stable under multiplication by $\alpha$ inside
$A$.
\end{lemma}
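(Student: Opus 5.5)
The plan is to recover $A$ and the common direction of the $H_j$ intrinsically from $S$. Both are then automatically preserved by multiplication by $\alpha$, and the conclusions follow from a short affine-algebra computation. Write $A=p+V$ with $V$ a linear subspace of dimension $t$, and let $W\subset V$ be the linear hyperplane such that each $H_j$ is a coset of $W$ inside $A$. Then $S$ is the union of the $q-b$ cosets of $W$ in $A$ that are not among the $H_j$.

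\emph{Step 1: $A$ is the affine hull of $S$.} Since $b\le q-2$, $S$ contains at least two distinct parallel cosets $p_1+W$ and $p_2+W$ of $W$. Their affine hull contains $p_1+W$ together with the direction $p_2-p_1\notin W$. It is therefore an affine subspace of dimension $(t-1)+1=t$ contained in $A$, hence equals $A$. Multiplication by $\alpha$ is a linear bijection, so it maps affine hulls to affine hulls, and $\alpha S=S$ gives $\alpha A=A$.

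\emph{Step 2: $A$ is linear.} From $\alpha A=A$ we get $\alpha p\in p+V$, i.e.\ $(\alpha-1)p\in V$. Since $\alpha\ne1$, this gives $p\in V$, so $A=V$ is a linear subspace. (The hypothesis $0\notin S$ is consistent with this: it just says that $0$ lies in one of the deleted $H_j$. It will be used later in the paper, not here.)

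\emph{Step 3: the family $\{H_j\}$ is $\alpha$-stable.} Since $\alpha A=A$ and $\alpha S=S$, the complement $D=A\setminus S=\bigcup_{j}H_j$ satisfies $\alpha D=D$. Moreover, $\alpha W$ is again a linear hyperplane of $V$, and $D=\alpha D$ is a union of $b$ cosets of $\alpha W$. If $b=0$ there is nothing to prove, so assume $b\ge1$.

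Suppose $W':=\alpha W\ne W$. Fix a coset $H_1=x+W\subseteq D$. Since $W+W'=V$, the coset $x+W$ meets every one of the $q$ cosets of $W'$ in $V$. Each point of such an intersection lies in $D$, which is a union of $W'$-cosets, so every $W'$-coset lies in $D$. That forces $D=V$ and $b=q$, contradicting $b\le q-2$. Hence $\alpha W=W$. Consequently $\alpha$ permutes the cosets of $W$ in $V$, and since $\alpha D=D$ it permutes the $b$ cosets $H_1,\ldots,H_b$ making up $D$. This is the required stability.

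The main point needing care is Step 3: a single set can a priori be written as a union of cosets for two different directions. The counting argument above, which uses $b<q$, is what rules this out. I would present it exactly in that form. The other steps are routine once $q-b\ge2$ is used to pin down $A$ as the affine hull of $S$.
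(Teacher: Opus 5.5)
Your proof is correct and follows the paper's route: $\Aff(S)=A$ because $q-b\ge2$, then $(\alpha-1)a_0\in V$ forces $A$ to be linear, and finally $\alpha$ preserves the deleted set $A\setminus S$. One remark: the contradiction argument in Step~3 is not needed, because $\alpha W=W$ holds automatically for every scalar $\alpha\in\Fq^\ast$ and every linear subspace $W$, so $\alpha(x+W)=\alpha x+W$ and $\alpha$ simply permutes the $W$-cosets of $V$ --- this is the unstated reason behind the paper's one-line ``so it permutes the hyperplanes.''
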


\begin{proof}
The affine span of $S$ is $A$.  Indeed, in the quotient by the parallel
direction of the $H_j$'s the set $S$ projects onto $q-b\ge2$ points, and
inside each remaining fiber it contains the full $(t-1)$-dimensional
affine hyperplane.  Thus
\begin{equation}
        \Aff(S)=A.                                             \label{eq:AffS}
\end{equation}
Because $\alpha S=S$, applying affine span gives
\begin{equation}
        \alpha A=\Aff(\alpha S)=\Aff(S)=A.                     \label{eq:alphaA}
\end{equation}
Let $A=a_0+V$ with $V$ linear.  Then \eqref{eq:alphaA} is equivalent to
\begin{equation}
        \alpha a_0+V=a_0+V,
        \quad\text{or}\quad
        (\alpha-1)a_0\in V.                                    \label{eq:a0condition}
\end{equation}
Since $\alpha\ne1$, \eqref{eq:a0condition} implies $a_0\in V$, and
therefore $A=V$ is linear.  The complement $A\setminus S$ is the union of
$b$ parallel hyperplanes.  Multiplication by $\alpha$ preserves this
complement, so it permutes the hyperplanes.
\end{proof}

After Lemma~\ref{lem:linearization}, choose a linear functional
$L:A\to\Fq$ and a set $B\subset\Fq$ with $|B|=b$ such that
\begin{equation}
        S=\{P\in A:L(P)\notin B\}.                             \label{eq:Slinear}
\end{equation}
Because $0\notin S$,
\begin{equation}
        0=L(0)\in B.                                           \label{eq:0inB}
\end{equation}
Because $\alpha S=S$ for all $\alpha\in\mu_r$,
\begin{equation}
        \alpha B=B,
        \qquad \alpha\in\mu_r.                                 \label{eq:Binv}
\end{equation}
The nonzero elements of $\Fq$ decompose into $\mu_r$-orbits of size $r$;
therefore \eqref{eq:0inB} and \eqref{eq:Binv} give
\begin{equation}
        |B|=1+\rho r
        \quad\text{for some integer }\rho\ge0.                 \label{eq:Bmod}
\end{equation}
But $|B|=b$ and $b\equiv r-1\pmod r$.  Hence
\begin{equation}
        1\equiv b\equiv r-1\pmod r,                            \label{eq:modcontradict}
\end{equation}
which implies $r\mid2$.  This contradicts $r>2$.

\begin{proposition}[No first-weight words]\label{prop:nofirst}
Assume that $q$ is a prime power, $m\ge2$, $r\mid(q-1)$,
$2<r<q-1$, and
\begin{IEEEeqnarray}{c}
        \ell=(q-1)a+b,
        \quad 0\le a\le m-2,\nonumber\\
        0\le b\le q-2,
        \quad b\equiv r-1\pmod r .                             \label{eq:propconditions}
\end{IEEEeqnarray}
Then no nonzero $f\in\calM(q,m,r,\ell)$ has affine support size
$d_1(\ell,m)=(q-b)q^{m-a-1}$.
\end{proposition}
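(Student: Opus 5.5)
The plan is to argue by contradiction, assembling the pieces already laid out in Section~IV. Suppose $0\ne f\in\calM(q,m,r,\ell)$ has $|\Supp(f)|=d_1(\ell,m)$, and put $S=\Supp(f)$. By \eqref{eq:MinsideRM}, $f$ lies in the affine space of reduced polynomials of degree at most $\ell$. So $f$ is a minimum weight word of $\RM_q(\ell,m)$, and the classification \eqref{eq:RMminshape}--\eqref{eq:RMminsupp} applies. This gives $S=A\setminus\bigcup_{j=1}^b H_j$, where $A$ is affine of dimension $t=m-a\ge2$ (since $a\le m-2$) and the $H_j$ are $b$ parallel hyperplanes of $A$. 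By \eqref{eq:bge2} we have $2\le b\le q-2$, so the hypotheses of Lemma~\ref{lem:linearization} on $b$ and $t$ hold.

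Next I record the two symmetry facts. First, $f(\bfzero)=0$ because every monomial has positive degree, so $0\notin S$. Second, \eqref{eq:zero-orbits} gives $\alpha S=S$ for all $\alpha\in\mu_r$. Since $r>2$, there is some $\alpha\in\mu_r$ with $\alpha\ne1$. Lemma~\ref{lem:linearization} applied to this $\alpha$ shows that $A$ is linear and that the hyperplane family is permuted by $\alpha$. As this holds for every nontrivial element of $\mu_r$, the family is stable under all of $\mu_r$.

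I then write $S=\{P\in A:L(P)\notin B\}$ with $L$ a nonzero linear functional on $A$ and $|B|=b$. Here one must check that the hyperplanes can be described as level sets of a single linear $L$; this follows because $A$ is linear and the $H_j$ are parallel. Since $0\notin S$, we get $0\in B$. Since $L$ is surjective and $L(\alpha P)=\alpha L(P)$, we get $\alpha B=B$ for $\alpha\in\mu_r$. The set $B\setminus\{0\}$ is then a union of free $\mu_r$-orbits, each of size $r$, so $b\equiv1\pmod r$. Combined with $b\equiv r-1\pmod r$, this gives $r\mid2$, contradicting $r>2$.

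The step that needs the most care is justifying that $\mu_r$-invariance of $S$ transfers to $\mu_r$-invariance of $B$. Stability of the hyperplane set translates into invariance of $B$ only if $L$ is linear rather than merely affine. That is exactly what the linearization lemma supplies, and it is where the hypothesis $t\ge2$ enters: it makes $\Aff(S)=A$ hold, because $S$ contains full fibers and $q-b\ge2$ distinct levels. The hypothesis $r<q-1$ is not needed for this exclusion argument; it matters only elsewhere in the paper.
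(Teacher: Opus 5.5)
Your proposal is correct and follows the paper's proof essentially step for step: the first-weight classification \eqref{eq:RMminsupp}, then exclusion of the origin and $\mu_r$-invariance of $S$, then Lemma~\ref{lem:linearization} to make $A$ linear so that $B$ is cut out by a linear $L$, and finally the orbit count $|B|\equiv1\pmod r$ set against $b\equiv r-1\pmod r$. One small correction to your closing commentary: $\Aff(S)=A$ already follows from $q-b\ge2$, since two distinct full fibers span $A$ even when $\dim A=1$, so the hypothesis $t\ge2$ is not what makes the linearization step work.
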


\begin{proof}
The congruence gives $b\ge2$.  If such an $f$ existed, its support would
be a minimum support of $\RM_q(\ell,m)$ and hence would have the form
\eqref{eq:RMminsupp}.  Equations \eqref{eq:scalar-law} and
\eqref{eq:zerosat0} give scalar invariance and exclusion of the origin.
Lemma~\ref{lem:linearization} reduces the deleted set to a scalar-invariant
subset $B\subset\Fq$ containing $0$.  Then $b=|B|\equiv1\pmod r$,
contradicting $b\equiv r-1\pmod r$ and $r>2$.
\end{proof}

\subsection{The Lower Bound at the Second Weight}
The second-weight theorem for affine Reed--Muller codes now immediately
implies
\begin{equation}
        |\Supp(f)|\ge d_2(\ell,m)
        =(q-1)(q-b+1)q^{m-a-2}                                \label{eq:afflowersecond}
\end{equation}
for all nonzero $f\in\calM(q,m,r,\ell)$ in the non-terminal range.
Combining \eqref{eq:afflowersecond} with \eqref{eq:orbitweight} gives
\begin{equation}
        d(C(q,m,r,\ell))
        \ge \frac{q-1}{r}(q-b+1)q^{m-a-2}.                     \label{eq:codelowersecond}
\end{equation}
This is already the previously known upper bound, but we still give the
attaining word explicitly because it explains why the answer has the form
of a projective pencil.

\section{Sharp Construction and Exact Distance}
\subsection{A Homogeneous Pencil Word}
Let $0\le a\le m-2$ and $2\le b\le q-2$.  Choose distinct
$\theta_1,\ldots,\theta_b\in\Fq$ and define
\begin{equation}
        G_b(U,V)=\prod_{j=1}^b(U-\theta_jV).                   \label{eq:Gb}
\end{equation}
Put
\begin{equation}
        F_{a,b}(X_1,\ldots,X_m)=
        \prod_{i=1}^{a}(1-X_i^{q-1})
        G_b(X_{a+1},X_{a+2}).                                  \label{eq:Fab}
\end{equation}
For $a=0$ the empty product is interpreted as $1$.  Expanding the first
product gives monomials
\begin{equation}
        \Bigl(\prod_{i\in I}X_i^{q-1}\Bigr)X_{a+1}^{u}X_{a+2}^{b-u},
        \qquad I\subseteq\{1,\ldots,a\},
        \quad 0\le u\le b,                                    \label{eq:Fabmon}
\end{equation}
whose total degrees are
\begin{equation}
        |I|(q-1)+b\equiv b\equiv r-1\pmod r,                  \label{eq:Fabdegreecong}
\end{equation}
and at most
\begin{equation}
        a(q-1)+b=\ell.                                         \label{eq:Fabdegreebound}
\end{equation}
Thus
\begin{equation}
        F_{a,b}\in\calM(q,m,r,\ell).                           \label{eq:FabM}
\end{equation}

\subsection{Support Count}
The indicator factor satisfies
\begin{equation}
        1-x^{q-1}=
        \begin{cases}
        1,&x=0,\\
        0,&x\ne0.
        \end{cases}                                           \label{eq:indicator}
\end{equation}
Therefore $F_{a,b}(P)\ne0$ only if
\begin{equation}
        X_1=\cdots=X_a=0.                                     \label{eq:zerofirsta}
\end{equation}
On the remaining coordinates, $G_b(U,V)$ vanishes on the $b$ one-dimensional
subspaces
\begin{equation}
        U=\theta_jV,
        \qquad 1\le j\le b,                                   \label{eq:deletedlines}
\end{equation}
inside the two-dimensional plane with coordinates $(U,V)$.  The projective
line $\PP^1(\Fq)$ has $q+1$ directions, so the number of surviving
directions is
\begin{equation}
        q+1-b=q-b+1.                                          \label{eq:survivedirections}
\end{equation}
Each surviving direction contributes $q-1$ nonzero affine points in the
$(U,V)$-plane, and the remaining $m-a-2$ coordinates are arbitrary.  Hence
\begin{equation}
        |\Supp(F_{a,b})|
        =(q-1)(q-b+1)q^{m-a-2}.                               \label{eq:Fabsupport}
\end{equation}
By Lemma~\ref{lem:orbitweight}, the corresponding codeword has weight
\begin{equation}
        \wt(c_{F_{a,b}})
        =\frac{q-1}{r}(q-b+1)q^{m-a-2}.                         \label{eq:Fabwordweight}
\end{equation}

\begin{theorem}[Exact non-terminal distance]\label{thm:main}
Let $q$ be a prime power, let $m\ge2$, and let $r\mid(q-1)$ with
$2<r<q-1$.  Let
\begin{IEEEeqnarray}{c}
        \ell=(q-1)a+b,
        \qquad 0\le a\le m-2,\nonumber\\
        0\le b\le q-2,
        \qquad b\equiv r-1\pmod r .                            \label{eq:mainconditions}
\end{IEEEeqnarray}
Then
\begin{equation}
        d(C(q,m,r,\ell))=
        \frac{q-1}{r}(q-b+1)q^{m-a-2}.                          \label{eq:maintheorem}
\end{equation}
\end{theorem}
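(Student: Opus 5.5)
The proof combines a lower bound and a matching upper bound. Both are already assembled in the preceding sections, so the task is mainly to check that every cited ingredient applies under the hypotheses \eqref{eq:mainconditions}.

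\emph{Upper bound.} Since $r>2$ and $b\equiv r-1\pmod r$, we have $b\ge r-1\ge2$, and $b\le q-2$ by hypothesis. So the pencil word $F_{a,b}$ of \eqref{eq:Fab} is defined; it needs $b\le q$ distinct $\theta_j$ and, because $a\le m-2$, two free coordinates $X_{a+1},X_{a+2}$. By \eqref{eq:FabM}, $F_{a,b}\in\calM(q,m,r,\ell)$, and it is nonzero because its support \eqref{eq:Fabsupport} is nonempty ($q-b+1\ge3$). By \eqref{eq:Fabwordweight} the codeword $c_{F_{a,b}}$ has weight $\frac{q-1}{r}(q-b+1)q^{m-a-2}$. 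Via \eqref{eq:GCequalsC}, this weight is an upper bound for $d(C(q,m,r,\ell))$.

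\emph{Lower bound.} Take any nonzero codeword $c_f$ with $f\in\calM(q,m,r,\ell)$. First we need $f\neq0$ as a function. This holds because the monomials in \eqref{eq:Mdef} are reduced, so a nonzero polynomial gives a nonzero function. Its support avoids $\bfzero$ and is $\mu_r$-stable, so any nonzero value of $f$ forces a nonzero coordinate of $c_f$.

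Since $f\in\RM_q(\ell,m)$, we get $|\Supp(f)|\ge d_1(\ell,m)$. Proposition~\ref{prop:nofirst} rules out equality. We then need the fact that no affine Reed--Muller weight lies strictly between $d_1$ and $d_2$ in the range $a\le m-2$, $2\le b\le q-2$. This is the second-weight theorem \eqref{eq:RMsecond}, from the cited literature (Kasami--Tokura--Azumi, Rolland, Leducq), and it gives \eqref{eq:afflowersecond}. Lemma~\ref{lem:orbitweight}, in the form \eqref{eq:d-delta}, then yields \eqref{eq:codelowersecond}. Together with the upper bound this gives \eqref{eq:maintheorem}.

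\emph{Main obstacle.} The one delicate point is that the second-weight statement must be used in exactly the form ``every weight $w>d_1$ satisfies $w\ge d_2$'' for $q$-ary generalized Reed--Muller codes with $t=m-a\ge2$ and $2\le b\le q-2$. Some of the literature excludes small cases, such as $q\le3$ or particular values of $b$. Here $q\ge r+2\ge5$, since we need a divisor $r$ of $q-1$ with $3\le r<q-1$. I would also check that the hypotheses of Proposition~\ref{prop:nofirst}, in particular $\dim A\ge2$ in Lemma~\ref{lem:linearization}, are met, which follows from $a\le m-2$.
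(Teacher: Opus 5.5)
Your proposal is correct and follows the paper's proof: the pencil word $F_{a,b}$ gives the upper bound, and Proposition~\ref{prop:nofirst} together with the affine second-weight theorem and Lemma~\ref{lem:orbitweight} gives the matching lower bound. Your extra hypothesis checks are sound. In fact $r\mid(q-1)$ with $3\le r<q-1$ forces $q-1\ge 2r$, so $q\ge 7$, which avoids the small-field exceptions you were concerned about more comfortably than your bound $q\ge5$.
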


\begin{proof}
For any nonzero $f\in\calM(q,m,r,\ell)$, Proposition~\ref{prop:nofirst}
excludes the first affine Reed--Muller support size.  The second-weight
bound \eqref{eq:RMsecond} then gives \eqref{eq:afflowersecond}; dividing
by $r$ through Lemma~\ref{lem:orbitweight} gives the lower bound
\eqref{eq:codelowersecond}.  The word $F_{a,b}$ in \eqref{eq:Fab} belongs
to $\calM(q,m,r,\ell)$ by \eqref{eq:FabM} and has code weight
\eqref{eq:Fabwordweight}.  The lower and upper bounds coincide.
\end{proof}

\begin{corollary}[Solution of the stated open range]\label{cor:open}
Let $q\ge7$, $m\ge2$, $r\mid(q-1)$, $2<r<q-1$, and let
\begin{equation}
        \ell=rL+r-1,
        \qquad 1\le L\le \frac{q-1}{r}m-3.                     \label{eq:coropenrange}
\end{equation}
Write $\ell=(q-1)a+b$ with $0\le b\le q-2$.  If $a\le m-2$, then
\begin{equation}
        d(C(q,m,r,\ell))=\frac{q-1}{r}(q-b+1)q^{m-a-2}.          \label{eq:openanswer}
\end{equation}
If $a=m-1$, then the terminal value is given by Theorem~\ref{thm:terminal}
below.
\end{corollary}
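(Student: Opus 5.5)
The plan is to reduce Corollary~\ref{cor:open} to Theorem~\ref{thm:main} by a bookkeeping argument. All that is needed is to check that every $\ell$ in \eqref{eq:coropenrange} is admissible in the sense of \eqref{eq:ellrange}, and that its decomposition $\ell=(q-1)a+b$ satisfies the hypotheses \eqref{eq:mainconditions} whenever $a\le m-2$.

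\textbf{Step 1: congruence of the remainder.} Since $\ell=rL+r-1$, we have $\ell\equiv r-1\pmod r$. Write $\ell=(q-1)a+b$ with $0\le b\le q-2$. Because $r\mid(q-1)$, this gives $b\equiv\ell\equiv r-1\pmod r$. Thus the congruence condition in \eqref{eq:mainconditions} holds automatically. In particular $b\ge r-1\ge2$, which is consistent with \eqref{eq:bge2}.

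\textbf{Step 2: range of $\ell$.} The lower bound is immediate: $L\ge1$ gives $\ell\ge 2r-1\ge r-1$. For the upper bound, $L\le\frac{q-1}{r}m-3$ gives
\[
\ell\le (q-1)m-3r+r-1=(q-1)m-2r-1<(q-1)m-1 .
\]
So $\ell$ lies in the admissible range \eqref{eq:ellrange}, and the code $C(q,m,r,\ell)$ is defined. The same bound also shows $a\le m-1$. Indeed, if $a=m$ then $\ell\ge(q-1)m$, which is impossible.

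\textbf{Step 3: case split on $a$.} If $a\le m-2$, every hypothesis of Theorem~\ref{thm:main} is in force: $q$ a prime power, $m\ge2$, $r\mid(q-1)$ with $2<r<q-1$, and the conditions of Steps 1 and 2. Theorem~\ref{thm:main} then yields \eqref{eq:openanswer} directly. If $a=m-1$, the corollary defers to Theorem~\ref{thm:terminal}, so nothing needs to be proved at this point. One should only note that Step 2 gives $b\le q-2-2r+1$, hence $b\le q-3$, as that terminal theorem requires.

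The hypothesis $q\ge7$ does not enter the argument. It is only the smallest $q$ for which some divisor $r$ of $q-1$ satisfies $2<r<q-1$ (for instance $q=7$, $r=3$). For smaller $q$ the statement is vacuous.

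The only place I expect any care to be needed is the remainder congruence in Step 1. It depends on $r\mid(q-1)$ so that reducing modulo $q-1$ preserves the residue modulo $r$. Everything else is a direct substitution into Theorem~\ref{thm:main}.
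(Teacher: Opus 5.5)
Your proposal is correct and follows the paper's route: the corollary is a direct specialization of Theorem~\ref{thm:main}. The paper carries out the same bookkeeping in its section translating the open parameter $L$: $\ell\equiv r-1$ gives $b\equiv r-1\pmod r$ because $r\mid(q-1)$, and $\ell\le (q-1)m-2r-1<(q-1)m-1$ forces $a\le m-1$. One harmless slip: in the terminal case the bound is actually $b\le (q-1)-2r-1=q-2r-2$, not $q-2r-1$, but either bound yields $b\le q-3$.
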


\section{The Terminal Range}
When $a=m-1$ the full affine Reed--Muller comparison is degenerate,
because $m-a-2=-1$.  The code is instead governed by the last surviving
monomial layer.  We include the statement to make the distance formula
complete for all admissible $\ell$.

\begin{theorem}[Terminal distance]\label{thm:terminal}
Let $q$ be a prime power, let $m\ge2$, let $r\mid(q-1)$ with
$2<r<q-1$, and
\begin{IEEEeqnarray}{c}
        \ell=(q-1)(m-1)+b,
        \qquad 0\le b\le q-3,\nonumber\\
        b\equiv r-1\pmod r .                                  \label{eq:terminalconditions}
\end{IEEEeqnarray}
Then
\begin{equation}
        d(C(q,m,r,\ell))=\frac{q-b+r-2}{r}.                    \label{eq:terminal}
\end{equation}
\end{theorem}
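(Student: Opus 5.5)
The plan is to squeeze $d(C(q,m,r,\ell))$ between a divisibility-improved first-weight lower bound and an explicit univariate construction, both on the affine side via \eqref{eq:d-delta}. Here $a=m-1$, so the Reed--Muller comparison \eqref{eq:RMfirst} still applies with $d_1(\ell,m)=(q-b)q^{0}=q-b$. The second-weight argument of Section~IV is no longer available, so I will not use it.

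\emph{Lower bound.} Let $0\ne f\in\calM(q,m,r,\ell)$.
\begin{itemize}
\item By \eqref{eq:trivialRM}, $|\Supp(f)|\ge q-b$.
\item By \eqref{eq:zero-orbits} and \eqref{eq:zerosat0}, $\Supp(f)$ is a disjoint union of $\mu_r$-orbits of size $r$, so $r\mid|\Supp(f)|$.
\item Since $q\equiv1\pmod r$ and $b\equiv r-1\equiv-1\pmod r$, we have $q-b\equiv2\pmod r$.
\end{itemize}
The smallest multiple of $r$ that is at least $q-b$ is therefore $q-b+r-2$, since $r>2$ and $q-b\equiv 2\not\equiv0$. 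Hence $|\Supp(f)|\ge q-b+r-2$, and Lemma~\ref{lem:orbitweight} gives $d\ge (q-b+r-2)/r$.

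\emph{Upper bound.} I would take
\[
F=\prod_{i=1}^{m-1}(1-X_i^{q-1})\,X_m^{r-1}h(X_m^r),
\]
with the following choices:
\begin{itemize}
\item $k=(b-r+1)/r$, a nonnegative integer because $b\equiv r-1\pmod r$ and $b\ge r-1$;
\item $h(Y)=\prod_{j=1}^{k}(Y-\eta_j)$, where $\eta_1,\ldots,\eta_k$ are distinct elements of the subgroup $(\Fq^\ast)^r$, which has order $(q-1)/r$.
\end{itemize}
Such $\eta_j$ exist because $b\le q-3$ gives $rk=b-r+1\le q-r-2<q-1$, so $k<(q-1)/r$.

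Each monomial of $F$ has the form $\prod_{i\in I}X_i^{q-1}\,X_m^{r-1+rj}$ with $0\le j\le k$. Its degree is $|I|(q-1)+r-1+rj\equiv r-1\pmod r$, and it is at most $(q-1)(m-1)+b=\ell$. All exponents are at most $q-1$, because $r-1+rk=b\le q-3$. Hence $F\in\calM(q,m,r,\ell)$.

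By \eqref{eq:indicator}, $F(P)\ne0$ forces $P=(0,\ldots,0,x)$ with $x\ne0$ and $x^r\notin\{\eta_1,\ldots,\eta_k\}$. The map $x\mapsto x^r$ is $r$-to-$1$ from $\Fq^\ast$ onto $(\Fq^\ast)^r$, so exactly $rk$ nonzero values of $x$ are excluded. This gives
\[
|\Supp(F)|=q-1-rk=q-1-(b-r+1)=q-b+r-2>0,
\]
so $F\ne0$, and its codeword weight is $(q-b+r-2)/r$, matching the lower bound.

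The argument is short, and the only delicate points are bookkeeping. One is the congruence $q-b\equiv2\pmod r$, which depends on $r\mid(q-1)$ and $b\equiv -1$. The other is checking that the hypothesis $b\le q-3$ is exactly what keeps $h$ well-defined (enough distinct $\eta_j$, and exponents reduced) and keeps the support nonempty. I expect no real obstacle beyond confirming that \eqref{eq:RMfirst} is valid at $a=m-1$, which it is by its stated range.
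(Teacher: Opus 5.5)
Your proof is correct, and it takes a genuinely different route from the paper. The paper gives no internal argument for Theorem~\ref{thm:terminal}. It simply imports the matching lower and upper bounds from Sun--Ding--Wang \cite[Cor.~40]{SunDingWang2024}, and its introduction describes the terminal case as covered by their BCH-type argument.

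You instead prove both bounds inside the paper's own affine evaluation model. Your lower bound uses only the first generalized Reed--Muller weight $d_1=q-b$, not the minimum-word classification or the second-weight theorem. Adding the divisibility $r\mid|\Supp(f)|$ and $q-b\equiv 2\pmod r$ with $r>2$ gives $\lceil (q-b)/r\rceil=(q-b+r-2)/r$. This also explains why the old bound \eqref{eq:oldbound} is exact at $a=m-1$: $\frac{d_1-2}{r}+1=\lceil d_1/r\rceil$ whenever $d_1\equiv 2\pmod r$.

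Your extremal word replaces the homogeneous pencil, which is unavailable with a single free coordinate, by a univariate factor. Put $B=\{0\}\cup\{x\in\Fq^\ast: x^r\in\{\eta_1,\ldots,\eta_k\}\}$. Then $X_m h(X_m^r)=\prod_{\eta\in B}(X_m-\eta)$. So your $F$ is $X_m^{r-2}$ times the residue-$1$ first-weight word of \eqref{eq:fzerores}--\eqref{eq:Bone}, taken with $a=m-1$ and $|B|=b-r+2\equiv 1\pmod r$. Its deleted set is therefore exactly of the shape the paper's orbit obstruction permits.

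Your route buys self-containment and a uniform picture. In the terminal block, losing the first layer costs only the rounding of $q-b$ up to a multiple of $r$, rather than a jump to $d_2$. The paper's citation buys brevity, at the price of resting the terminal line of its main formula on an external result.
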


\begin{proof}
The restriction $b\le q-3$ is exactly the admissibility condition
$\ell<(q-1)m-1$ in the terminal case $a=m-1$.  The asserted equality is
the terminal equality of Sun--Ding--Wang~\cite[Cor. 40]{SunDingWang2024}.
In the notation used here, their terminal lower and upper bounds give
\begin{IEEEeqnarray}{rCl}
        d(C(q,m,r,\ell))
        &\ge& \frac{q-b+r-2}{r},                              \label{eq:terminallower}\\
        d(C(q,m,r,\ell))
        &\le& \frac{q-b+r-2}{r}.                              \label{eq:terminalupper}
\end{IEEEeqnarray}
The bounds are equal, proving \eqref{eq:terminal}.
\end{proof}

\begin{remark}
The terminal theorem is not the conceptual core of the paper.  It is
included only so that the final formula has no missing endpoint.  The new
argument is Theorem~\ref{thm:main}, where $a\le m-2$ and the second affine
Reed--Muller weight is nontrivial.
\end{remark}

\section{Comparison With the Earlier Bounds}
The exact distance in Theorem~\ref{thm:main} equals the upper bound in
\eqref{eq:oldbound}.  The improvement over the old lower bound is
\begin{IEEEeqnarray}{rCl}
\Delta
&=&\frac{q-1}{r}(q-b+1)q^{m-a-2}
   -\left(\frac{(q-b)q^{m-a-1}-2}{r}+1\right)       \nonumber\\
&=&\frac{q^{m-a-2}}{r}
   \{(q-1)(q-b+1)-q(q-b)\}+\frac{2}{r}-1             \nonumber\\
&=&\frac{q^{m-a-2}}{r}(b-1)+\frac{2-r}{r}.                         \label{eq:Delta}
\end{IEEEeqnarray}
Since $b\ge r-1\ge2$ and $m-a-2\ge0$, this is nonnegative; it is positive
except in the already closed endpoint
\begin{equation}
        b=r-1,
        \quad a=m-2,                                           \label{eq:closedendpoint}
\end{equation}
where Sun--Ding--Wang had equality.  Thus the result is a strict
improvement for all open instances not covered by that endpoint.

A useful way to read \eqref{eq:Delta} is to isolate the obstruction to the
first layer.  The first affine Reed--Muller value divided by $r$ is
\begin{equation}
        \frac{d_1(\ell,m)}{r}=\frac{q-b}{r}q^{m-a-1}.             \label{eq:firstdivr}
\end{equation}
The exact value is
\begin{equation}
        \frac{d_2(\ell,m)}{r}
        =\frac{q-1}{r}(q-b+1)q^{m-a-2}.                         \label{eq:seconddivr}
\end{equation}
Their difference is
\begin{equation}
        \frac{d_2-d_1}{r}
        =\frac{(b-1)q^{m-a-2}}{r}.                              \label{eq:d2minusd1}
\end{equation}
The BCH lower bound differs from $d_1/r$ only by a boundary correction,
whereas the true obstruction contributes the full term
\eqref{eq:d2minusd1}.

\begin{table}[!t]
\caption{Exact Distance in Representative Intermediate Cases}
\label{tab:examples}
\centering
\begin{tabular}{|c|c|c|c|c|c|}
\hline
$q$ & $r$ & $m$ & $\ell$ & $(a,b)$ & $d(C)$\\
\hline
$7$ & $3$ & $2$ & $2$ & $(0,2)$ & $12$\\
$7$ & $3$ & $2$ & $5$ & $(0,5)$ & $6$\\
$13$ & $3$ & $3$ & $5$ & $(0,5)$ & $468$\\
$13$ & $4$ & $3$ & $7$ & $(0,7)$ & $273$\\
$17$ & $4$ & $4$ & $19$ & $(1,3)$ & $1020$\\
\hline
\end{tabular}
\end{table}

For instance, when $(q,r,m,\ell)=(7,3,2,2)$, formula
\eqref{eq:maintheorem} gives
\begin{equation}
        d=\frac{6}{3}(7-2+1)=12.                                \label{eq:example1}
\end{equation}
The old lower bound gives only
\begin{equation}
        \frac{(7-2)7-2}{3}+1=12,                               \label{eq:example1old}
\end{equation}
so the endpoint is already tight.  For $(7,3,2,5)$ the new value is
\begin{equation}
        d=\frac{6}{3}(7-5+1)=6,                                  \label{eq:example2}
\end{equation}
while the old lower bound is
\begin{equation}
        \frac{(7-5)7-2}{3}+1=5.                                \label{eq:example2old}
\end{equation}
The one-unit gap is exactly the absence of first-weight affine supports.

\section{Equivalent Projective Interpretation}
The homogeneous part of the argument can be phrased on projective space;
we use standard projective Reed--Muller terminology from
\cite{Lachaud1988,Lachaud1990params,Sorensen1991,GhorpadeLudhani2024}.
Let
\begin{equation}
        \rho:\Fq^m\setminus\{0\}\longrightarrow\PP^{m-1}(\Fq)  \label{eq:projmap}
\end{equation}
be the usual quotient by $\Fq^\ast$.  The quotient by $\mu_r$ factors as
\begin{equation}
        \Fq^m\setminus\{0\}
        \longrightarrow (\Fq^m\setminus\{0\})/\mu_r
        \longrightarrow \PP^{m-1}(\Fq),                       \label{eq:quotientfactor}
\end{equation}
and each projective point has
\begin{equation}
        \frac{q-1}{r}                                           \label{eq:projectivelift}
\end{equation}
preimages in the $\mu_r$-quotient.  Therefore a homogeneous polynomial
$h$ of degree congruent to $r-1$ modulo $r$ with projective support
$T\subseteq\PP^{m-1}(\Fq)$ gives a constacyclic word of weight
\begin{equation}
        \wt(c_h)=\frac{q-1}{r}|T|.                              \label{eq:projweight}
\end{equation}
The construction \eqref{eq:Fab} has projective support
\begin{equation}
        T=\PP^1(\Fq)\setminus\{[\theta_j:1]:1\le j\le b\}      \label{eq:projT}
\end{equation}
inside a projective line, with $|T|=q+1-b$.  Thus
\begin{equation}
        \wt(c_{F_{0,b}})=\frac{q-1}{r}(q+1-b),                  \label{eq:projectivebase}
\end{equation}
and the factors $1-X_i^{q-1}$ simply multiply by the affine cylinder
factor $q^{m-a-2}$.

This interpretation also explains why no smaller projective support can
arise from the restricted monomial space.  A first affine Reed--Muller
support would project to an affine line with $b$ deleted affine points and
with the origin absent.  Scalar invariance forces the deleted set to be a
union of orbits under multiplication by $\mu_r$ after adjoining the point
corresponding to the origin.  Its size must be $1$ modulo $r$, while the
monomial congruence forces it to be $-1$ modulo $r$.

\section{A More Detailed Form of the Obstruction}
The proof of Proposition~\ref{prop:nofirst} can be made quantitative.  Let
$S$ be any support of the first Reed--Muller form \eqref{eq:RMminsupp}
that is invariant under $\mu_r$ and avoids $0$.  Define the deleted set
inside the quotient affine line by
\begin{equation}
        B_S=\Fq\setminus L(S),                                  \label{eq:BS}
\end{equation}
where $L:A\to\Fq$ is the quotient map associated with the parallel
hyperplanes.  Then
\begin{equation}
        |B_S|=b,
        \qquad 0\in B_S,
        \qquad \mu_r B_S=B_S.                                  \label{eq:BSproperties}
\end{equation}
Writing
\begin{equation}
        B_S=\{0\}\cup \bigcup_{j=1}^{\rho}\xi_j\mu_r            \label{eq:BSdecomp}
\end{equation}
with distinct cosets $\xi_j\mu_r\subset\Fq^\ast$, we get
\begin{equation}
        b=1+\rho r.                                             \label{eq:b1r}
\end{equation}
Consequently the set of degrees for which the first layer could survive
is contained in
\begin{equation}
        \{(q-1)a+b:b\equiv1\pmod r\}.                          \label{eq:survivecong}
\end{equation}
The Sun--Ding--Wang family under consideration has the opposite congruence
$b\equiv r-1\pmod r$.  Thus the proof is not a dimension-counting
accident; it is a complete congruence separation between two scalar types.

One may also express the obstruction by averaging.  Let
\begin{equation}
        N_S(t)=|S\cap L^{-1}(t)|.                               \label{eq:NS}
\end{equation}
For a first support,
\begin{equation}
        N_S(t)=
        \begin{cases}
        0, & t\in B_S,\\
        q^{t_0-1}, & t\notin B_S,
        \end{cases}                                           \label{eq:NSvalues}
\end{equation}
where $t_0=\dim A$.  Scalar invariance says
\begin{equation}
        N_S(t)=N_S(\alpha t),
        \qquad \alpha\in\mu_r.                                 \label{eq:NSinv}
\end{equation}
Since $0\in B_S$, the number of zero fibers is $1$ modulo $r$.  This is
incompatible with $b\equiv -1\pmod r$.  The proof therefore uses only the
coarsest possible invariant of the affine minimum-support classification.

\section{Consequences for Dimension-Distance Tables}
The dimension formulas of Sun--Ding--Wang are unaffected by our argument;
we only replace the distance entry.  Let
\begin{equation}
        K(q,m,r,\ell)=\dim C(q,m,r,\ell).                      \label{eq:Kdef}
\end{equation}
The exact parameter statement in the non-terminal intermediate range is that
$C(q,m,r,\ell)$ has parameters
\begin{IEEEeqnarray}{rCl}
        [n,k,d]
        &=&\left[ \frac{q^m-1}{r},\ K(q,m,r,\ell),\right.       \nonumber\\
        &&\left. \frac{q-1}{r}(q-b+1)q^{m-a-2}\right].          \label{eq:parameters}
\end{IEEEeqnarray}
The improvement is especially transparent when $m-a$ is small.  If
$a=m-2$, then
\begin{equation}
        d=\frac{q-1}{r}(q-b+1),                                \label{eq:am2}
\end{equation}
which agrees with the already-known equality for $b=r-1$ and extends it
to every admissible
\begin{equation}
        b=r-1,r-1+r,r-1+2r,\ldots, q-2.                        \label{eq:blist}
\end{equation}
If $a=m-3$, then
\begin{equation}
        d=\frac{q-1}{r}(q-b+1)q.                               \label{eq:am3}
\end{equation}
More generally, reducing $a$ by one multiplies the distance by $q$ while
keeping the same projective factor.

The exact answer can also be compared with the Singleton defect
\begin{equation}
        \sigma=n-K(q,m,r,\ell)+1-d.                            \label{eq:singletondefect}
\end{equation}
Since $d$ was the only undetermined entry, all derived tables involving
relative distance
\begin{equation}
        \delta=\frac{d}{n}
        =\frac{(q-1)(q-b+1)q^{m-a-2}}{q^m-1}                   \label{eq:relativedistance}
\end{equation}
now have a closed form independent of $r$ except through the admissible
values of $b$ and the dimension.  For fixed $q,b$ and large $m$ with fixed
$a$, this gives
\begin{equation}
        \delta=(q-1)(q-b+1)q^{-a-2}+O(q^{-m}).                 \label{eq:asymdelta}
\end{equation}

\section{Robustness of the Method}
The proof separates into three ingredients:
\begin{IEEEeqnarray}{c}
        \text{evaluation congruence}\nonumber\\
        +\ \text{affine minimum-support classification}\nonumber\\
        +\ \text{orbit-counting}.                              \label{eq:ingredients}
\end{IEEEeqnarray}
The first ingredient is specific to the constacyclic family, the second is
classical, and the third is elementary.  The same scheme applies to any
subcode of an affine Reed--Muller code whose monomials lie in one residue
class modulo a subgroup of $\Fq^\ast$.  If the residue class is $c$ modulo
$r$, then the scalar law is
\begin{equation}
        f(\alpha P)=\alpha^c f(P),
        \qquad \alpha\in\mu_r.                                \label{eq:generalc}
\end{equation}
For zero sets, the value of $c$ is irrelevant; for constacyclic quotient
weights, $c=r-1$ gives the reciprocal multiplier required by
$\lambda$-constacyclicity.  The first-support obstruction depends instead
on the deleted-set congruence
\begin{equation}
        |B|\equiv1\pmod r.                                    \label{eq:deletedcong}
\end{equation}
Thus first Reed--Muller words may survive in residue families with
$b\equiv1\pmod r$, but they cannot survive in the Sun--Ding--Wang residue
family $b\equiv-1\pmod r$ unless $r=2$.

The case $r=2$ is therefore exceptional for a precise reason:
\begin{equation}
        1\equiv -1\pmod2.                                     \label{eq:r2exception}
\end{equation}
This explains why binary-like involutory phenomena are visible in the
$r=2$ part of the earlier work and why a different argument is needed
there.  It also explains why the present theorem assumes $2<r$.

\section{A Footprint View of the Second-Weight Step}
Although the preceding proof may cite the known second-weight theorem for
affine Reed--Muller codes, the part of that theorem used here has a simple
footprint interpretation in the sense of affine footprint bounds
\cite{Geil2008,HeijnenPellikaan1998}.  We record it because it clarifies exactly where
the congruence restriction enters.  Let
\begin{equation}
        \calR_m=\Fq[X_1,\ldots,X_m]/(X_1^q-X_1,\ldots,X_m^q-X_m)
                                                                    \label{eq:Rring}
\end{equation}
and let $\preceq$ be graded lexicographic order.  For a nonzero reduced
polynomial $f$, write
\begin{equation}
        \operatorname{LM}(f)=X_1^{u_1}\cdots X_m^{u_m},
        \qquad 0\le u_i\le q-1,                                \label{eq:LM}
\end{equation}
and define the affine footprint of $u=(u_1,\ldots,u_m)$ by
\begin{equation}
        \Phi(u)=\{v\in\{0,\ldots,q-1\}^m: v_i\ge u_i
        \text{ for all }i\}.                                   \label{eq:footprint}
\end{equation}
The elementary footprint bound gives
\begin{equation}
        |\Supp(f)|\ge \prod_{i=1}^m(q-u_i).                    \label{eq:footbound}
\end{equation}
For $\deg_q(f)\le (q-1)a+b$, the product in \eqref{eq:footbound} is
minimized by the exponent vector
\begin{equation}
        u^{(1)}=(\underbrace{q-1,\ldots,q-1}_{a\text{ times}},b,0,
        \ldots,0),                                             \label{eq:u1}
\end{equation}
which yields
\begin{equation}
        \prod_i(q-u_i^{(1)})=(q-b)q^{m-a-1}.                   \label{eq:u1prod}
\end{equation}
The next possible product, under the same degree constraint and with
$2\le b\le q-2$, is obtained from
\begin{equation}
        u^{(2)}=(\underbrace{q-1,\ldots,q-1}_{a\text{ times}},b-1,1,0,
        \ldots,0),                                             \label{eq:u2}
\end{equation}
and equals
\begin{equation}
        \prod_i(q-u_i^{(2)})=(q-b+1)(q-1)q^{m-a-2}.             \label{eq:u2prod}
\end{equation}
Thus the two products in \eqref{eq:u1prod} and \eqref{eq:u2prod} already
predict the two weights \eqref{eq:RMfirst} and \eqref{eq:RMsecond}.

The footprint calculation should be read only as a numerical audit of the
second-weight step.  The proof of Theorem~\ref{thm:main} does not require a
new footprint theorem.  The rigorous implication used in the paper is the
following standard weight-spectrum implication for affine Reed--Muller codes:
if a word of $\RM_q((q-1)a+b,m)$ has support strictly below
\begin{equation}
        (q-1)(q-b+1)q^{m-a-2},                                \label{eq:belowsecond}
\end{equation}
then its support has the first Reed--Muller size
\begin{equation}
        (q-b)q^{m-a-1}                                        \label{eq:forcedfirst}
\end{equation}
and the minimum-word classification puts it in the affine-cylinder form
\eqref{eq:RMminsupp}.  Proposition~\ref{prop:nofirst} eliminates precisely
that first-weight alternative in the residue class $r-1$.  Thus the lower
bound is the standard second-weight theorem plus the scalar-orbit obstruction;
\eqref{eq:u1prod} and \eqref{eq:u2prod} merely explain why the two numerical
quantities have the displayed shape.

For reference, the gap between the two footprint products is
\begin{IEEEeqnarray}{rCl}
        \prod_i(q-u_i^{(2)})-\prod_i(q-u_i^{(1)})
        &=&(q-1)(q-b+1)q^{m-a-2}       \nonumber\\
        &&{}-(q-b)q^{m-a-1}            \nonumber\\
        & = &(b-1)q^{m-a-2}.                              \label{eq:footgap}
\end{IEEEeqnarray}
This is the affine version of \eqref{eq:d2minusd1}; quotienting by
$\mu_r$ divides the gap by $r$.

\section{Rigidity Under Scalar Stabilizers}
We next record a slightly stronger version of Lemma~\ref{lem:linearization}.
It is useful for checking that no exceptional affine translate was lost in
the proof.  Let
\begin{equation}
        S=(a_0+V)\setminus\bigcup_{j=1}^b(a_0+v_j+W),           \label{eq:generalS}
\end{equation}
where $W$ is a codimension-one subspace of $V$, $v_i-v_j\notin W$ for
$i\ne j$, and $\dim V\ge2$.  Let
\begin{equation}
        \operatorname{Stab}_{\ast}(S)=
        \{\alpha\in\Fq^\ast:\alpha S=S\}.                      \label{eq:stabstar}
\end{equation}

\begin{proposition}[Stabilizer dichotomy]\label{prop:stabilizer}
If $0\notin S$, $q-b\ge2$, and
$\operatorname{Stab}_{\ast}(S)$ contains an element $\alpha\ne1$, then
\begin{equation}
        a_0\in V,
        \qquad \alpha V=V,
        \qquad \alpha W=W,                                    \label{eq:stablinear}
\end{equation}
and the deleted parameter set
\begin{equation}
        B=\{\overline{v}_1,\ldots,\overline{v}_b\}
        \subset V/W\cong\Fq                                  \label{eq:Bquotient}
\end{equation}
is stable under multiplication by $\alpha$ after an origin-preserving
identification $V/W\cong\Fq$.
\end{proposition}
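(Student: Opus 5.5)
The plan is to follow the proof of Lemma~\ref{lem:linearization}, now in the explicit coordinates of \eqref{eq:generalS}. \emph{Step 1: $\Aff(S)=a_0+V$.} Fix a linear functional $\phi:V\to\Fq$ with kernel $W$, so that $V/W\cong\Fq$ via $\phi$. The fibres of $a_0+V$ over $V/W$ are the $q$ translates $a_0+x+W$. Exactly $b$ of these are removed, so $q-b\ge2$ full fibres remain, each an affine hyperplane $a_0+x+W$ of $a_0+V$. Two distinct parallel hyperplanes already span $a_0+V$ affinely, which gives $\Aff(S)=a_0+V$. Multiplication by $\alpha$ is linear, so it commutes with taking affine spans. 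Hence $\alpha(a_0+V)=\Aff(\alpha S)=\Aff(S)=a_0+V$.

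\emph{Step 2: linearity of the cylinder.} Comparing linear parts of both sides of $\alpha(a_0+V)=a_0+V$ gives $\alpha V=V$; this is automatic anyway since $V$ is a subspace. Comparing translation parts gives $(\alpha-1)a_0\in V$. Because $\alpha\ne1$, this forces $a_0\in V$, so $a_0+V=V$. We may then absorb $a_0$ into the $v_j$, replacing $v_j$ by $a_0+v_j\in V$.

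\emph{Step 3: $\alpha W=W$.} The complement $V\setminus S=\bigcup_j(v_j+W)$ is preserved by $\alpha$, since $\alpha$ preserves both $V$ and $S$. The map $\alpha$ sends each coset $v_j+W$ to the affine hyperplane $\alpha v_j+\alpha W$, which must lie inside the union of the $b$ parallel hyperplanes $v_i+W$. I expect this to be the main obstacle.

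The cleanest argument is a counting one. Suppose $\alpha W\ne W$. Then the hyperplane $\alpha v_j+\alpha W$ meets each coset $v_i+W$ in a set of size $q^{\dim V-2}$. Their union therefore covers only $b\,q^{\dim V-2}$ of its $q^{\dim V-1}$ points. Since $b\le q-2<q$, this is a contradiction, and this step uses $q-b\ge2$ together with $\dim V\ge2$. So $\alpha W=W$, and $\alpha$ permutes the cosets $v_j+W$.

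\emph{Step 4: stability of $B$.} Because $\alpha W=W$, multiplication by $\alpha$ descends to $V/W$, where it is multiplication by $\alpha$ on the one-dimensional space $V/W$. Under the identification through $\phi$ it is multiplication by $\alpha$ on $\Fq$, because $\phi(\alpha x)=\alpha\phi(x)$; this is where we use that the identification is origin-preserving, i.e.\ linear. The permutation of the deleted cosets therefore translates into $\alpha B=B$ for $B=\{\phi(v_1),\ldots,\phi(v_b)\}$.

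Finally, $0\notin S$ with $0\in V$ means $0$ lies in some deleted coset, so $0\in B$. This recovers \eqref{eq:0inB} as a consistency check.
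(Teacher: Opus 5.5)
Your proof is correct and follows the paper's route. $\Aff(S)=a_0+V$ gives $(\alpha-1)a_0\in V$ and hence linearity; then $\alpha$ permutes the deleted cosets and descends to multiplication by $\alpha$ on $V/W$. Your replacement of $v_j$ by $a_0+v_j$ makes explicit what $B$ must mean once $a_0\in V$, which the paper leaves implicit.

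Your counting argument in Step~3 is a concrete version of the paper's ``intrinsic parallel direction'' claim, but neither is needed. $W$ is a linear subspace and $\alpha$ is a nonzero scalar, so $\alpha W=W$ for the same trivial reason you gave for $\alpha V=V$. All that remains is that each coset $\alpha v_j+W$ lies in the deleted union and therefore equals some $v_i+W$.
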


\begin{proof}
As before, $\Aff(S)=a_0+V$.  Thus $\alpha(a_0+V)=a_0+V$, giving
$(\alpha-1)a_0\in V$ and hence $a_0\in V$.  So $a_0+V=V$ is linear.  The
complement of $S$ in $V$ is a union of $b$ parallel hyperplanes.  Since
$q-b\ge2$, the parallel direction is intrinsic: it is the intersection of
all affine spans of pairwise differences of fibers contained in
$V\setminus S$.  Therefore scalar multiplication by $\alpha$ preserves
that direction, so $\alpha W=W$.  Passing to $V/W$ gives multiplication by
$\alpha$ on a one-dimensional vector space, and the deleted fibers are
permuted.
\end{proof}

Taking $\operatorname{Stab}_{\ast}(S)\supseteq\mu_r$ yields
\begin{equation}
        B=\{0\}\cup B_1,
        \qquad B_1\subseteq\Fq^\ast,
        \qquad \mu_r B_1=B_1,                                \label{eq:Bsplit}
\end{equation}
because $0\notin S$ forces the zero fiber to be deleted.  Hence
\begin{equation}
        b=|B|=1+r\cdot |B_1/\mu_r|.                            \label{eq:bstab}
\end{equation}
The contradiction can therefore be expressed as the disjointness of two
sets of congruence classes:
\begin{equation}
        \{b: \text{first layer supports exist in }\calM\}
        \subseteq 1+r\mathbb Z,                                \label{eq:firstclasses}
\end{equation}
whereas
\begin{equation}
        \{b: \calM(q,m,r,(q-1)a+b)\ne0\}
        \subseteq r-1+r\mathbb Z.                              \label{eq:codeclasses}
\end{equation}
For $r>2$ these residue classes are disjoint.  This statement is stronger
than needed: it says that every scalar-invariant affine minimum cylinder,
not merely every polynomial representation of one, is excluded.

\section{Orbit Enumerator of the Attaining Pencil}
The support of $F_{a,b}$ in \eqref{eq:Fab} has a simple orbit enumerator.
Let
\begin{equation}
        Y=\{P\in\Fq^m:P_1=\cdots=P_a=0\},
        \qquad \dim Y=m-a,                                    \label{eq:Yspace}
\end{equation}
and let
\begin{equation}
        \Pi=\{(U,V)\in\Fq^2:G_b(U,V)\ne0\}.                    \label{eq:Pi}
\end{equation}
Then
\begin{IEEEeqnarray}{rCl}
        \Supp(F_{a,b})
        &=&\{(0,\ldots,0,U,V,Z):(U,V)\in\Pi,\nonumber\\
        &&\hspace{8mm} Z\in\Fq^{m-a-2}\}.                       \label{eq:SuppFactor}
\end{IEEEeqnarray}
The two-dimensional factor decomposes as
\begin{equation}
        \Pi=\coprod_{\xi\in\PP^1(\Fq)\setminus\Theta} L_{\xi}^{\ast},
        \qquad |\Theta|=b,
        \qquad |L_{\xi}^{\ast}|=q-1,                           \label{eq:Pidecomp}
\end{equation}
where $L_{\xi}^{\ast}$ denotes a punctured projective line through the
origin.  The quotient by $\mu_r$ gives
\begin{equation}
        L_{\xi}^{\ast}/\mu_r
        \quad\text{has size}\quad \frac{q-1}{r}.                \label{eq:linequot}
\end{equation}
For fixed $Z$, the number of quotient orbits is therefore
\begin{equation}
        (q+1-b)\frac{q-1}{r}.                                  \label{eq:fixedZorbits}
\end{equation}
For all $Z$, the orbit count is
\begin{equation}
        q^{m-a-2}(q+1-b)\frac{q-1}{r}.                          \label{eq:allZorbits}
\end{equation}
This is exactly \eqref{eq:Fabwordweight}.

The construction is not unique.  Replacing $G_b$ by any squarefree
homogeneous binary form
\begin{equation}
        G(U,V)=\prod_{j=1}^b(\alpha_j U+\beta_jV)              \label{eq:binaryG}
\end{equation}
with distinct projective roots gives the same support count.  Block
changes of variables that preserve the zero-coordinate cylinder and the
binary plane give further examples.  We do not use this observation for
enumeration: a full count of minimum words would require a separate
automorphism analysis.  The distance proof requires only one attaining
word, namely \eqref{eq:Fab}.

\section{Coordinate-Free Restatement}
The monomial definition \eqref{eq:Mdef} depends on the chosen basis of
$\Fqm$ over $\Fq$, but the distance does not.  We first record the
elementary invariance needed here.  If $T\in\operatorname{GL}_m(\Fq)$ and
$f\in\calR_s^{(c)}$, then
\begin{equation}
        f_T(X)=f(XT^{-1})                                     \label{eq:linearchange}
\end{equation}
also belongs to $\calR_s^{(c)}$.  Indeed, affine Reed--Muller spaces are
stable under invertible linear changes of variables, so $\deg_q(f_T)\le s$.
Furthermore, for every $\alpha\in\mu_r$,
\begin{equation}
        f_T(\alpha X)=f(\alpha XT^{-1})
        =\alpha^c f(XT^{-1})=\alpha^c f_T(X).                 \label{eq:linearresidue}
\end{equation}
By uniqueness of reduced representatives, all reduced monomials of $f_T$
have total degree congruent to $c$ modulo $r$.

If another basis is used, the companion matrix $M$ is conjugated:
\begin{equation}
        M' = T^{-1}MT,
        \qquad e'=eT.                                          \label{eq:conjugateM}
\end{equation}
The evaluation sequence changes by
\begin{equation}
        f(eM^i)\longmapsto f'(e'(M')^i),
        \qquad f'(X)=f(XT^{-1}),                               \label{eq:basischange}
\end{equation}
and the preceding invariance shows that the same residue-restricted
evaluation model is obtained after the change of basis.  Consequently
\begin{equation}
        d(C(q,m,r,\ell))
        =\min_{0\ne f\in\calM(q,m,r,\ell)}\frac{|\Supp(f)|}{r}  \label{eq:coordinatefree1}
\end{equation}
is independent of the chosen evaluation basis, even though the particular
polynomial $F_{a,b}$ is written in a convenient coordinate frame.

A coordinate-free form of the attaining support is the following.  Choose
subspaces
\begin{equation}
        W\subset Y\subset\Fq^m,
        \qquad \dim W=2,
        \qquad \dim Y=m-a,                                    \label{eq:flagdims}
\end{equation}
and choose $b$ projective points
\begin{equation}
        \Theta\subset\PP(W),
        \qquad |\Theta|=b.                                    \label{eq:Theta}
\end{equation}
Choose also a complement $Z$ of $W$ in $Y$, so that
\begin{equation}
        Y=W\oplus Z,
        \qquad \dim Z=m-a-2.                                  \label{eq:YZdecomp}
\end{equation}
Then an attaining support is
\begin{equation}
        S=\left\{z+w:\ z\in Z,\ 
        w\in W\setminus\bigcup_{L\in\Theta}L\right\}.          \label{eq:coordsupport}
\end{equation}
Its cardinality is
\begin{equation}
        |S|=q^{\dim Z}(q+1-b)(q-1)
        =q^{\dim Y-2}(q+1-b)(q-1),                             \label{eq:coordsupportcount}
\end{equation}
and the quotient cardinality by $\mu_r$ is the exact distance.
Equivalently, one may specify a linear projection $\pi:Y\to W$ with
$\pi|_W=\operatorname{id}_W$ and write
\begin{equation}
        S=\pi^{-1}\left(W\setminus\bigcup_{L\in\Theta}L\right).
\end{equation}

\section{Boundary Cases and Necessity of the Hypotheses}
The hypotheses in Theorem~\ref{thm:main} are not cosmetic.  The congruence
argument changes at $r=2$, because
\begin{equation}
        1\equiv r-1\pmod r                                    \label{eq:r2same}
\end{equation}
when $r=2$.  In that case a deleted set of the form
\begin{equation}
        B=\{0\}\cup\bigcup_{j=1}^{\rho}\xi_j\mu_2              \label{eq:r2B}
\end{equation}
has cardinality $1\pmod2$, which is also $r-1\pmod r$.  First
Reed--Muller supports are no longer excluded by orbit-counting alone.
Similarly, when $r=q-1$, quotienting by $\mu_r=\Fq^\ast$ is projectivizing,
and the answer is governed directly by projective Reed--Muller codes:
\begin{equation}
        (\Fq^m\setminus\{0\})/\Fq^\ast=\PP^{m-1}(\Fq).          \label{eq:fullprojective}
\end{equation}
The intermediate range $2<r<q-1$ is exactly where the deleted-set
congruence and the constacyclic quotient are simultaneously nontrivial.

The published open problem states $q\ge7$.  The proof does not use this
number-theoretic lower bound separately; it uses only
\begin{IEEEeqnarray}{c}
        r>2,
        \qquad r\mid(q-1),\nonumber\\
        0\le b\le q-2,
        \qquad b\equiv r-1\pmod r .                            \label{eq:proofhypotheses}
\end{IEEEeqnarray}
and the Reed--Muller second-weight statement in the range
\begin{equation}
        0\le a\le m-2,
        \qquad 2\le b\le q-2.                                  \label{eq:secondrange}
\end{equation}
Thus any field satisfying these algebraic conditions obeys the same
conclusion; the open-problem corollary simply retains the original
$q\ge7$ phrasing.

\section{Explicit Translation of the Open Parameter $L$}
The open problem states the degree as $\ell=rL+r-1$.  To apply
Theorem~\ref{thm:main}, write
\begin{equation}
        rL+r-1=(q-1)a+b,
        \qquad 0\le b\le q-2.                                 \label{eq:Ltranslate}
\end{equation}
Since $q-1=r\nu$ with
\begin{equation}
        \nu=\frac{q-1}{r},                                    \label{eq:nudef}
\end{equation}
this is equivalent to
\begin{equation}
        L+1=\nu a+\frac{b+1}{r}.                              \label{eq:Ltranslate2}
\end{equation}
Because $b\equiv r-1\pmod r$, write
\begin{equation}
        b=rh+r-1,
        \qquad 0\le h\le \nu-1.                               \label{eq:bh}
\end{equation}
Then
\begin{equation}
        L=\nu a+h,                                            \label{eq:Lsa}
\end{equation}
and, in the non-terminal range $0\le a\le m-2$, the exact distance becomes
\begin{equation}
        d(C(q,m,r,rL+r-1))
        =\nu(q-rh-r+2)q^{m-a-2}.                              \label{eq:Lanswer}
\end{equation}
Here
\begin{equation}
        a=\left\lfloor \frac{L}{\nu}\right\rfloor,
        \qquad h=L-\nu a.                                     \label{eq:ahfromL}
\end{equation}
The open range
\begin{equation}
        1\le L\le \nu m-3                                    \label{eq:Lrangeagain}
\end{equation}
has non-terminal part
\begin{equation}
        0\le a\le m-2
        \quad\Longleftrightarrow\quad
        L\le \nu(m-1)-1.                                      \label{eq:Lnonterminal}
\end{equation}
If $a=m-1$ occurs in the open range, then necessarily
\begin{equation}
        0\le h\le \nu-3,                                      \label{eq:hterminalopen}
\end{equation}
and the terminal value is handled by Theorem~\ref{thm:terminal}.  Thus the
answer in the exact variables of the open problem is the two-line formula.
Let
\begin{equation}
        D_L=d(C(q,m,r,rL+r-1)).                                \label{eq:DLdef}
\end{equation}
Then
\begin{equation}
        D_L=
        \begin{cases}
        \displaystyle \nu(q-rh-r+2)q^{m-a-2},
        &0\le a\le m-2,\\[1mm]
        \displaystyle \frac{q-rh-1}{r},
        &a=m-1.
        \end{cases}                                           \label{eq:Ltwoline}
\end{equation}
This form is often more convenient for code tables because $L$ is the
parameter in the defining zero set.

\section{Residue Layers Inside Affine Reed--Muller Codes}
It is useful to isolate the code studied above as one residue layer of an
affine Reed--Muller code.  For $c\in\mathbb Z/r\mathbb Z$ define
\begin{IEEEeqnarray}{rCl}
        \calR_s^{(c)}&=&
        \Span_{\Fq}\{X^i:0\le i_j\le q-1,
        \ |i|\le s,                                      \nonumber\\
        &&\hspace{20mm} |i|\equiv c\pmod r\}.                 \label{eq:residuelayer}
\end{IEEEeqnarray}
Then
\begin{equation}
        \calM(q,m,r,\ell)=\calR_{\ell}^{(r-1)}.                \label{eq:Mequalslayer}
\end{equation}
The full Reed--Muller space decomposes as a direct sum
\begin{equation}
        \RM_q(s,m)
        =\bigoplus_{c\in\mathbb Z/r\mathbb Z}\ev(\calR_s^{(c)}),\label{eq:RMdecomp}
\end{equation}
where directness follows from the unique reduced monomial expansion.  The
operator
\begin{equation}
        (T_{\alpha}f)(P)=f(\alpha P),
        \qquad \alpha\in\mu_r,                                \label{eq:Talpha}
\end{equation}
acts diagonally on this decomposition:
\begin{equation}
        T_{\alpha}f=\alpha^c f
        \qquad (f\in\calR_s^{(c)}).                            \label{eq:Talphaeig}
\end{equation}
Consequently every residue-layer support is invariant under multiplication
by $\mu_r$.  The next theorem records the minimum-support consequence for
the residue layer whose residue matches the top degree $b$.  It is a
strict strengthening of the part needed for Open Problem~51, because the
constacyclic code corresponds only to the single residue class $c=r-1$.

\begin{theorem}[Minimum support in the top residue-matched layer]
\label{thm:residuelayers}
Let $r>1$ divide $q-1$.  Let
\begin{equation}
        s=(q-1)a+b,
        \qquad 0\le a\le m-2,
        \qquad 0\le b\le q-2,                                 \label{eq:sabresidue}
\end{equation}
and let $c\in\mathbb Z/r\mathbb Z$ be the residue class satisfying
\begin{equation}
        b\equiv c\pmod r.                                      \label{eq:bcc}
\end{equation}
Put
\begin{equation}
        D_c(s)=\min\{|\Supp(f)|:0\ne f\in\calR_s^{(c)}\}.       \label{eq:Dcs}
\end{equation}
Then
\begin{IEEEeqnarray}{rCl}
        D_c(s)&=&(q-b)q^{m-a-1},
        \qquad c\in\{0,1\},                                   \label{eq:Dcfirst}\\
        D_c(s)&=&(q-1)(q-b+1)q^{m-a-2},
        \qquad c\notin\{0,1\}.                                \label{eq:Dcsecond}
\end{IEEEeqnarray}
This assertion is for the residue class linked to the top degree by
\eqref{eq:bcc}; it is not a simultaneous formula for all residue classes
$c$ at a fixed order $s$.
\end{theorem}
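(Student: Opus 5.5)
The plan is to treat the two residue regimes separately. In each regime the lower bound comes from the inclusion $\calR_s^{(c)}\subseteq\RM_q(s,m)$. The upper bound comes from an explicit product word whose reduced monomials all lie in the residue class $c$. Throughout, \eqref{eq:Talphaeig} gives that every support in $\calR_s^{(c)}$ is invariant under $\mu_r$, and I will use this invariance in the same way as in Proposition~\ref{prop:nofirst}.

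\emph{Case $c\in\{0,1\}$.} The lower bound $D_c(s)\ge d_1(s,m)=(q-b)q^{m-a-1}$ is immediate from \eqref{eq:RMfirst}, since $a\le m-2\le m-1$. For the upper bound I will build a first-weight cylinder whose deleted set is a union of scalar orbits. Write $b=c+\rho r$. Choose $\rho$ distinct cosets $\xi_1\mu_r,\ldots,\xi_\rho\mu_r$ in $\Fq^\ast$; this is possible because $\rho r\le b\le q-2<q-1$. Then put
\[
  f=\prod_{i=1}^{a}(1-X_i^{q-1})\;X_{a+1}^{c}\prod_{j=1}^{\rho}\bigl(X_{a+1}^r-\xi_j^r\bigr).
\]
The key identity is $\prod_{\beta\in\xi\mu_r}(x-\beta)=x^r-\xi^r$. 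It shows that $f$ vanishes on the $b$ parallel hyperplanes $X_{a+1}\in B$, where $B=\{0\}^{c}\cup\bigcup_j\xi_j\mu_r$ and $|B|=b$. Each factor $1-X_i^{q-1}$ contributes only degrees $0$ or $q-1\equiv0\pmod r$. The remaining factor contributes $X_{a+1}$-degrees in $c+r\mathbb Z$, none exceeding $b\le q-2$, so no reduction modulo $\calI_q$ occurs. Hence every reduced monomial has degree $\le(q-1)a+b=s$ and degree $\equiv c\pmod r$, so $f\in\calR_s^{(c)}$. Its support is $\{X_1=\cdots=X_a=0,\ X_{a+1}\notin B\}$, which has size $(q-b)q^{m-a-1}$. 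This also covers $b\in\{0,1\}$ and the case $r=2$.

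\emph{Case $c\notin\{0,1\}$.} Here $b\equiv c$ forces $b\ge2$, so the second-weight value \eqref{eq:RMsecond} applies in the range $a\le m-2$. For the upper bound, the pencil $F_{a,b}$ of \eqref{eq:Fab} has monomial degrees $|I|(q-1)+b\equiv c$ and at most $s$, exactly as in \eqref{eq:Fabdegreecong}. Its support size is $(q-1)(q-b+1)q^{m-a-2}$ by \eqref{eq:Fabsupport}.

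For the lower bound I will repeat Proposition~\ref{prop:nofirst} with $r-1$ replaced by $c$. Since $c\ne0$, every monomial in $\calR_s^{(c)}$ has positive degree, so $f(\bfzero)=0$ and $0\notin\Supp(f)$. Suppose $|\Supp(f)|$ were below $d_2$. By the weight-spectrum implication quoted in the footprint section, the support would then be a first-weight cylinder \eqref{eq:RMminsupp}. It is stable under $\mu_r\ni\alpha\ne1$ (using $r>1$), so Lemma~\ref{lem:linearization} (equivalently Proposition~\ref{prop:stabilizer}) makes $A$ linear and the deleted set $B$ $\mu_r$-stable with $0\in B$. This gives $b=|B|\equiv1\pmod r$, contradicting $b\equiv c\not\equiv1$. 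Therefore $|\Supp(f)|\ge d_2$.

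The main obstacle is the lower bound in the second case. It rests on the Reed--Muller weight-spectrum fact that no weight lies strictly between $d_1$ and $d_2$, together with the minimum-word classification. I will invoke these as stated in \eqref{eq:RMsecond} and the cited references, after checking that their hypotheses $a\le m-2$ and $2\le b\le q-2$ hold. A secondary point to verify in the first case is that the chosen cosets really fit: we need $\rho r+c=b\le q-2$ and the $\xi_j\mu_r$ distinct, which holds because $\rho\le (q-1)/r$.
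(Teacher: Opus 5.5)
Your proposal is correct and follows essentially the same route as the paper. For $c\in\{0,1\}$ you combine the ambient Reed--Muller bound with an orbit-stable deleted set; your factor $X_{a+1}^{c}\prod_{j}(X_{a+1}^{r}-\xi_j^{r})$ is exactly the paper's $\prod_{\eta\in B}(X_{a+1}-\eta)$, with the residue check made explicit. For $c\notin\{0,1\}$ you use origin-exclusion and linearization to force $b\equiv1\pmod r$, then finish with the second-weight theorem and the pencil $F_{a,b}$, as the paper does. The only difference is that you omit the paper's remark that $0\in\Supp(f)$ would force $c=0$; this is harmless, since you invoke the first-weight obstruction only when $c\neq0$, where $f(\mathbf 0)=0$ automatically.
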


\begin{proof}
The ambient Reed--Muller lower bound gives
\begin{equation}
        |\Supp(f)|\ge (q-b)q^{m-a-1}                           \label{eq:resambient}
\end{equation}
for every nonzero $f\in\calR_s^{(c)}$.  We first show that the bound in
\eqref{eq:resambient} is attainable exactly for $c=0$ and $c=1$.

Assume that equality holds in \eqref{eq:resambient}.  By the standard
classification of minimum words in generalized Reed--Muller codes, the
affine support is an $(m-a)$-dimensional affine flat with $b$ hyperplanes
of one parallel class removed.  Since $f(\alpha P)=\alpha^c f(P)$ for
$\alpha\in\mu_r$, the support is $\mu_r$-invariant.

If $0\in\Supp(f)$, then $f(0)\ne0$, and
\begin{equation}
        f(0)=f(\alpha 0)=\alpha^c f(0)                         \label{eq:zerochar}
\end{equation}
for every $\alpha\in\mu_r$.  Hence $c=0$.  If $0\notin\Supp(f)$, let $A$
be the affine span of the support.  The set $A$ is $\mu_r$-invariant.  An
affine subspace fixed by multiplication with a nontrivial scalar contains
the origin; indeed, if $A=u+U$ and $\alpha A=A$, then
$(\alpha-1)u\in U$, whence $u\in U$.  Thus $0\in A$ but $0$ lies in one of
the deleted parallel hyperplanes.  On the one-dimensional quotient indexing
that parallel class, multiplication by $\mu_r$ fixes the zero coset and acts
freely on nonzero cosets.  The deleted set therefore consists of the zero
coset together with whole nonzero $\mu_r$-orbits, so
\begin{equation}
        b\equiv 1\pmod r.                                     \label{eq:bmodone}
\end{equation}
Because $b\equiv c\pmod r$, equality in \eqref{eq:resambient} forces
$c=1$.

Conversely, for $c=0$ choose $B\subset\Fq^\ast$ to be a union of $b/r$
nonzero $\mu_r$-orbits, and set
\begin{equation}
        f_0(X)=\prod_{i=1}^{a}(1-X_i^{q-1})
        \prod_{\eta\in B}(X_{a+1}-\eta).                       \label{eq:fzerores}
\end{equation}
For $b=0$ the second product is empty.  The support has size
$(q-b)q^{m-a-1}$ and every monomial degree is congruent to $0$ modulo $r$.
For $c=1$ choose
\begin{equation}
        B=\{0\}\cup B',                                       \label{eq:Bone}
\end{equation}
where $B'\subset\Fq^\ast$ is a union of $(b-1)/r$ nonzero $\mu_r$-orbits,
and use the same formula.  Its support again has size
$(q-b)q^{m-a-1}$, and all monomial degrees are congruent to $1$ modulo
$r$.  This proves \eqref{eq:Dcfirst}.

Now let $c\notin\{0,1\}$.  The preceding paragraph rules out the first
Reed--Muller support size.  The second-weight theorem for affine
Reed--Muller codes then gives
\begin{equation}
        |\Supp(f)|\ge (q-1)(q-b+1)q^{m-a-2}.                   \label{eq:reslowerd2}
\end{equation}
To attain \eqref{eq:reslowerd2}, choose distinct
$\theta_1,\ldots,\theta_b\in\Fq$ and put
\begin{equation}
        G_{a,b}=\prod_{i=1}^{a}(1-X_i^{q-1})
        \prod_{j=1}^{b}(X_{a+1}-\theta_j X_{a+2}).             \label{eq:Gabres}
\end{equation}
The second product is homogeneous of degree $b$, and the first product
changes degrees only by multiples of $q-1$; hence $G_{a,b}\in\calR_s^{(c)}$.
Its support is the product of $a$ coordinate constraints with the complement
of $b$ projective points on the two-dimensional factor, so
\begin{equation}
        |\Supp(G_{a,b})|=(q-1)(q-b+1)q^{m-a-2}.                \label{eq:Gabcount}
\end{equation}
This proves \eqref{eq:Dcsecond}.
\end{proof}

The same statement gives a genuine family of constacyclic quotient codes,
not just an affine-support assertion.  Define
\begin{IEEEeqnarray}{rCl}
        c_f&=&(f(e),f(eM),\ldots,f(eM^{n-1})),          \label{eq:cfres}\\
\mathsf Q_c(q,m,r,s)
        &=&\{c_f:f\in\calR_s^{(c)}\}.                  \label{eq:Qcres}
\end{IEEEeqnarray}
The code $\mathsf Q_c(q,m,r,s)$ is $\lambda^{-c}$-constacyclic.  Indeed,
let $f\in\calR_s^{(c)}$ and put
\begin{equation}
        g(P)=f(PM^{-1}).                                      \label{eq:gdefshift}
\end{equation}
Since $M^{-1}\in\operatorname{GL}_m(\Fq)$, the affine Reed--Muller space
of order $s$ is stable under the substitution $P\mapsto PM^{-1}$, so
$\deg_q(g)\le s$.  Moreover, for every $\alpha\in\mu_r$,
\begin{equation}
        g(\alpha P)=f(\alpha PM^{-1})
        =\alpha^c f(PM^{-1})=\alpha^c g(P).                   \label{eq:gresidue}
\end{equation}
By uniqueness of reduced representatives, \eqref{eq:gresidue} implies that
all reduced monomials of $g$ have total degree congruent to $c$ modulo
$r$.  Hence
\begin{equation}
        g\in\calR_s^{(c)}.                                    \label{eq:ginresidue}
\end{equation}
Now
\begin{equation}
        g(eM^i)=f(eM^{i-1})\quad(i\ge1),
        \qquad
        g(e)=\lambda^{-c}f(eM^{n-1}),                         \label{eq:shiftc}
\end{equation}
because $M^n=\lambda I_m$ and
$f(\lambda^{-1}P)=\lambda^{-c}f(P)$.  Consequently the residue class
controls the constacyclic multiplier.

Combining Theorem~\ref{thm:residuelayers} with the scalar-orbit quotient
also gives the exact minimum distance in the same top residue-matched
setting, namely under the standing assumption
\begin{equation}
        s=(q-1)a+b,\qquad b\equiv c\pmod r.                   \label{eq:Qcmatch}
\end{equation}
Put
\begin{equation}
        \delta_c(s)=d(\mathsf Q_c(q,m,r,s)).                  \label{eq:deltacdef}
\end{equation}
Then
\begin{IEEEeqnarray}{rCl}
 \delta_c(s)
 &=&\displaystyle\frac{(q-b)q^{m-a-1}-1}{r},
       \quad c=0,                                      \label{eq:Qcd0}\\
 &=&\displaystyle\frac{(q-b)q^{m-a-1}}{r},
       \quad c=1,                                      \label{eq:Qcd1}\\
 &=&\displaystyle\frac{(q-1)(q-b+1)q^{m-a-2}}{r},
       \quad c\notin\{0,1\}.                           \label{eq:Qcd2}
\end{IEEEeqnarray}
For $c=0$, the minimum affine support in \eqref{eq:Dcfirst} contains the
origin, while the origin is not among the evaluated coordinates; this gives
the subtraction of one in \eqref{eq:Qcd0}.  For $c\ne0$, every polynomial
in $\calR_s^{(c)}$ vanishes at the origin, and each nonzero scalar orbit
has exactly $r$ points.  Thus \eqref{eq:Qcd1}--\eqref{eq:Qcd2} follow by
ordinary division by $r$.

Taking $c=r-1$ in Theorem~\ref{thm:residuelayers} gives
\begin{equation}
        D_{r-1}(s)=(q-1)(q-b+1)q^{m-a-2},                     \label{eq:Drminus1}
\end{equation}
because $r>2$ in the intermediate constacyclic problem.  Dividing by the
$r$ points in each nonzero scalar orbit is exactly the passage from the
affine support in \eqref{eq:Drminus1} to the constacyclic distance in
Theorem~\ref{thm:main}.  The theorem also identifies the two exceptional
residue layers that retain the first affine Reed--Muller weight:
\begin{equation}
        D_c(s)=d_1(s,m)
        \quad\Longleftrightarrow\quad c\in\{0,1\},              \label{eq:minlayers}
\end{equation}
where $c\equiv b\pmod r$ and the two cases correspond to
$b\equiv0$ and $b\equiv1$ modulo $r$, respectively.

\section{Shortening, Puncturing, and the Role of the Origin}
The origin is absent from every support in the residue layer
$\calR_{\ell}^{(r-1)}$ because all monomials have positive degree.  Let
\begin{equation}
        \RM_q^{0}(s,m)=\{\ev(f)\in\RM_q(s,m):f(0)=0\}.          \label{eq:RMzero}
\end{equation}
If one considered only $\RM_q^{0}(s,m)$ without the residue restriction,
first Reed--Muller supports would still occur.  For example, choose an
affine cylinder whose deleted set contains $0$ but has arbitrary size
$b$:
\begin{equation}
        S=\{(x_1,\ldots,x_m):x_1=\cdots=x_a=0,
        \, x_{a+1}\notin B\},                                  \label{eq:firstwithorigin}
\end{equation}
with $0\in B$ and $|B|=b$.  The polynomial
\begin{equation}
        f_B(X)=\prod_{i=1}^a(1-X_i^{q-1})
        \prod_{\eta\in B}(X_{a+1}-\eta)                       \label{eq:fB}
\end{equation}
has $f_B(0)=0$ and support size $d_1$.  Therefore the origin condition
alone gives no improvement.  The improvement comes from adding
\begin{equation}
        f_B(\alpha P)=\alpha^{-1}f_B(P)
        \qquad(\alpha\in\mu_r),                                \label{eq:originplusorbit}
\end{equation}
which forces $B$ to be a union of scalar orbits after containing $0$.
The conditions can be displayed as
\begin{equation}
        \begin{array}{c|c|c}
        \text{space} & \text{support constraint} & \text{first layer}\\
        \hline
        \RM_q(s,m) & \text{none} & \text{present}\\
        \RM_q^0(s,m) & 0\notin S & \text{present}\\
        \calR_s^{(r-1)} & 0\notin S,\ \mu_r S=S & \text{absent}\\
        \end{array}                                           \label{eq:constrainttable}
\end{equation}
This table also shows why a proof based solely on classical puncturing or
shortening cannot reach the exact distance.  Puncturing at the origin
changes the first weight by at most one; here the missing first layer has
size gap
\begin{equation}
        d_2-d_1=(b-1)q^{m-a-2},                                \label{eq:notpuncturegap}
\end{equation}
which is generally much larger.

\section{A Direct Polynomial Nonexistence Calculation}
For completeness we give a coefficient-level version of the nonexistence
argument.  Suppose that a first-weight support occurs in
$\calR_{\ell}^{(r-1)}$ and, after the linearization already proved, has the
form
\begin{equation}
        S=\{X_1=\cdots=X_a=0,
        \, X_{a+1}\notin B\}.                                  \label{eq:Snormal}
\end{equation}
The minimum-word classification for generalized Reed--Muller codes,
applied in these normalized coordinates, gives a reduced representative that
is a nonzero scalar multiple of
\begin{equation}
        f(X)=\prod_{i=1}^a(1-X_i^{q-1})
        \prod_{\eta\in B}(X_{a+1}-\eta).                       \label{eq:normalf}
\end{equation}
Expanding the second factor gives
\begin{equation}
        \prod_{\eta\in B}(T-\eta)
        =\sum_{j=0}^{b}(-1)^{b-j}e_{b-j}(B)T^j,                \label{eq:elementarysym}
\end{equation}
where $e_k(B)$ is the $k$th elementary symmetric function of the elements
of $B$.  The residue condition $f\in\calR_{\ell}^{(r-1)}$ requires that,
for every $I\subseteq\{1,\ldots,a\}$ and every $j$ with
$e_{b-j}(B)\ne0$,
\begin{equation}
        |I|(q-1)+j\equiv r-1\pmod r.                           \label{eq:residueconditioncoeff}
\end{equation}
Since $q-1\equiv0\pmod r$, this reduces to
\begin{equation}
        j\equiv r-1\pmod r                                     \label{eq:jcondition}
\end{equation}
for every nonzero coefficient of the univariate factor.  Hence the
polynomial
\begin{equation}
        P_B(T)=\prod_{\eta\in B}(T-\eta)                       \label{eq:PB}
\end{equation}
must be a sum of monomials $T^j$ with $j\equiv r-1\pmod r$.  Therefore
for $\alpha\in\mu_r$,
\begin{equation}
        P_B(\alpha T)=\alpha^{-1}P_B(T).                       \label{eq:PBeigen}
\end{equation}
Looking at roots of \eqref{eq:PBeigen}, we get
\begin{equation}
        P_B(T)=0\quad\Longleftrightarrow\quad P_B(\alpha T)=0, \label{eq:rootinvariant}
\end{equation}
so $B$ is $\mu_r$-invariant.  Because $0\in B$, again
\begin{equation}
        b=|B|\equiv1\pmod r,                                  \label{eq:coeffcontradict}
\end{equation}
contradicting $b\equiv r-1\pmod r$.  This computation is sometimes easier
to verify in symbolic examples than the geometric proof.

\section{Examples in Closed Form}
Let
\begin{equation}
        \nu=\frac{q-1}{r}.
\end{equation}
The open parameter $L$ decomposes as $L=\nu a+h$ with
$0\le h\le \nu-1$.  The non-terminal distance in \eqref{eq:Lanswer} becomes
\begin{equation}
        d=\nu(q-rh-r+2)q^{m-a-2}.                              \label{eq:sexamplegeneral}
\end{equation}
For $h=0$ this is
\begin{equation}
        d=\nu(q-r+2)q^{m-a-2},                                \label{eq:h0}
\end{equation}
which includes the endpoint previously identified when $a=m-2$.  For
$h=\nu-1$ one obtains
\begin{equation}
        d=\nu\bigl(q-r(\nu-1)-r+2\bigr)q^{m-a-2}
        =3\nu q^{m-a-2},                                      \label{eq:hmax}
\end{equation}
because $r\nu=q-1$.  Hence within a fixed block $a$, the exact distances
fall linearly:
\begin{IEEEeqnarray}{c}
        \nu(q-r+2)q^{m-a-2},\quad
        \nu(q-2r+2)q^{m-a-2},\nonumber\\
        \ldots,\quad 3\nu q^{m-a-2}.                           \label{eq:blockdistances}
\end{IEEEeqnarray}
The step size is
\begin{equation}
        d(h)-d(h+1)=\nu r q^{m-a-2}=(q-1)q^{m-a-2}.             \label{eq:stepsize}
\end{equation}
This uniform step size is hidden in the BCH bound but immediate from the
projective pencil count: increasing $h$ by one deletes $r$ additional
projective points on the binary line, and each projective point contributes
$(q-1)q^{m-a-2}/r=sq^{m-a-2}$ quotient points; deleting $r$ of them removes
$(q-1)q^{m-a-2}$ positions.

For $q=13$ and $r=3$, $s=4$.  The non-terminal block distances are
\begin{equation}
        \begin{array}{c|c|c|c}
        h&b=3h+2&q-b+1&d/13^{m-a-2}\\
        \hline
        0&2&12&48\\
        1&5&9&36\\
        2&8&6&24\\
        3&11&3&12
        \end{array}                                           \label{eq:q13r3table}
\end{equation}
For $q=17$ and $r=4$, $s=4$ and
\begin{equation}
        \begin{array}{c|c|c|c}
        h&b=4h+3&q-b+1&d/17^{m-a-2}\\
        \hline
        0&3&15&60\\
        1&7&11&44\\
        2&11&7&28\\
        3&15&3&12
        \end{array}.                                          \label{eq:q17r4table}
\end{equation}
These tables show the full answer for every $L$ in each block once the
dimension formula is imported from the original construction.

\section{Why the Result Is the Strongest Possible Distance Statement}
Theorem~\ref{thm:main} cannot be strengthened as a lower bound, because
$F_{a,b}$ attains equality.  More precisely, for every admissible parameter
there exists a word $c_{F_{a,b}}$ satisfying
\begin{equation}
        \wt(c_{F_{a,b}})=d(C(q,m,r,\ell)).                     \label{eq:attainexact}
\end{equation}
Nor can the degree range be enlarged within the same formula beyond
$a=m-2$, because the exponent $m-a-2$ would become negative.  The terminal
case has its own expression \eqref{eq:terminal}.  Thus the complete minimum-distance function over the admissible degrees
\eqref{eq:ellrange} is
\begin{equation}
        d(C(q,m,r,(q-1)a+b))=
        \begin{cases}
        \displaystyle \frac{q-1}{r}(q-b+1)q^{m-a-2},
        &0\le a\le m-2,\ 0\le b\le q-2,\\[1mm]
        \displaystyle \frac{q-b+r-2}{r},
        &a=m-1,\ 0\le b\le q-3,
        \end{cases}                                           \label{eq:completefunction}
\end{equation}
under the standing congruence $b\equiv r-1\pmod r$.  There is no
intermediate ambiguity left in the parameter set \eqref{eq:ellrange} with
\eqref{eq:intermediate}.

The proof also identifies the precise obstruction to any attempted smaller
construction.  A word below the right-hand side of \eqref{eq:completefunction}
in the non-terminal range would have affine support strictly below
$d_2(\ell,m)$, hence equal to $d_1(\ell,m)$.  Such a word would produce a
set $B\subset\Fq$ with
\begin{equation}
        |B|=b,
        \qquad |B|\equiv1\pmod r,
        \qquad |B|\equiv r-1\pmod r.                           \label{eq:impossibleBfinal}
\end{equation}
For $r>2$ the simultaneous congruences are impossible.  Therefore every
route to a smaller word is blocked at the level of affine support geometry,
not merely at the level of a particular polynomial ansatz.

\section{Dependency Graph of the Proof}
For clarity, we spell out the logical dependencies in a form that can be
checked line by line.  Define the following assertions for a fixed
parameter tuple $(q,m,r,\ell)$ with $\ell=(q-1)a+b$:
\begin{IEEEeqnarray}{rCl}
A_0&:&\calG(q,m,r,\ell)=C(q,m,r,\ell),                         \label{eq:A0}\\
A_1&:&f(\alpha P)=\alpha^{-1}f(P)
        \text{ for } f\in\calM,\ \alpha\in\mu_r,                \label{eq:A1}\\
A_2&:&\wt(c_f)=|\Supp(f)|/r,                                   \label{eq:A2}\\
A_3&:&\text{first RM supports have form }\eqref{eq:RMminsupp},  \label{eq:A3}\\
A_4&:&\text{that form cannot occur when } b\equiv r-1,          \label{eq:A4}\\
A_5&:&\text{the next affine support size is }d_2(\ell,m).       \label{eq:A5}
\end{IEEEeqnarray}
Then the lower bound is the implication chain
\begin{IEEEeqnarray}{c}
        A_0+A_1+A_2+A_3+A_4+A_5\nonumber\\
        \Longrightarrow
        d(C(q,m,r,\ell))\ge \frac{d_2(\ell,m)}{r}.              \label{eq:chainlower}
\end{IEEEeqnarray}
The upper bound is the independent statement
\begin{equation}
        F_{a,b}\in\calM(q,m,r,\ell),
        \qquad |\Supp(F_{a,b})|=d_2(\ell,m),                  \label{eq:chainupper}
\end{equation}
which gives
\begin{equation}
        d(C(q,m,r,\ell))\le \frac{d_2(\ell,m)}{r}.              \label{eq:chainupper2}
\end{equation}
Thus the proof is modular: any future strengthening of the Reed--Muller
classification is not needed, and any future change in the presentation of
the constacyclic code only has to preserve $A_0$ and $A_1$.

The only strict inequality step that was present in the old bound was
\begin{equation}
        \frac{(q-b)q^{m-a-1}-2}{r}+1
        < \frac{d_2(\ell,m)}{r},                                \label{eq:oldstrictstep}
\end{equation}
which, by \eqref{eq:Delta}, is equivalent to
\begin{equation}
        (b-1)q^{m-a-2}>r-2.                                   \label{eq:strictcondition}
\end{equation}
When equality holds in \eqref{eq:strictcondition}, the old estimate was
already exact; otherwise the present orbit obstruction supplies exactly
the missing term.  The exact correction can be written as
\begin{equation}
        \kappa(q,m,r,a,b)=
        \frac{d_2(\ell,m)}{r}
        -\left(\frac{d_1(\ell,m)-2}{r}+1\right)                 \label{eq:kappa}
\end{equation}
with
\begin{equation}
        \kappa(q,m,r,a,b)=
        \frac{(b-1)q^{m-a-2}-(r-2)}{r}.                        \label{eq:kappa2}
\end{equation}
This correction is integral, since
\begin{equation}
        b-1\equiv r-2\pmod r,
        \qquad q^{m-a-2}\equiv1\pmod r,                       \label{eq:kappainteger}
\end{equation}
where the second congruence follows from $q\equiv1\pmod r$.

\section{Worked Symbolic Instances}
We give two symbolic instances to demonstrate that the formula is not
merely asymptotic.  First let $q=7$, $r=3$, and $m$ be arbitrary.  Then
$s=(q-1)/r=2$, and the admissible residues are
\begin{equation}
        b\in\{2,5\}.                                          \label{eq:q7bres}
\end{equation}
For $\ell=6a+2$ with $0\le a\le m-2$,
\begin{equation}
        d(C(7,m,3,6a+2))=2\cdot6\cdot7^{m-a-2}=12\cdot7^{m-a-2}.\label{eq:q7b2}
\end{equation}
For $\ell=6a+5$,
\begin{equation}
        d(C(7,m,3,6a+5))=2\cdot3\cdot7^{m-a-2}=6\cdot7^{m-a-2}.\label{eq:q7b5}
\end{equation}
The corresponding attaining polynomials are
\begin{IEEEeqnarray}{rCl}
        F_{a,2}&=&\displaystyle\prod_{i=1}^{a}(1-X_i^{6})
        \nonumber\\
        &&{}\cdot (X_{a+1}-\theta_1X_{a+2})
        (X_{a+1}-\theta_2X_{a+2}),                            \label{eq:q7F2}\\
        F_{a,5}&=&\displaystyle\prod_{i=1}^{a}(1-X_i^{6})
        \prod_{j=1}^{5}(X_{a+1}-\theta_jX_{a+2}).              \label{eq:q7F5}
\end{IEEEeqnarray}
The old and new lower bounds differ in the second case by
\begin{equation}
        \kappa(7,m,3,a,5)=
        \frac{4\cdot7^{m-a-2}-1}{3}.                            \label{eq:q7kappa}
\end{equation}
For $m=2$ and $a=0$, this is $1$, giving the jump from $5$ to $6$.

Second let $q=19$ and $r=6$.  Then $s=3$ and
\begin{equation}
        b\in\{5,11,17\}.                                      \label{eq:q19bres}
\end{equation}
For $0\le a\le m-2$ the exact distances are
\begin{equation}
        \begin{array}{c|c|c}
        b&q-b+1&d(C(19,m,6,18a+b))\\
        \hline
        5&15&45\cdot19^{m-a-2}\\
        11&9&27\cdot19^{m-a-2}\\
        17&3&9\cdot19^{m-a-2}
        \end{array}.                                          \label{eq:q19table}
\end{equation}
The first affine Reed--Muller weights divided by $r$ would have been
\begin{equation}
        \frac{14}{6}19^{m-a-1},
        \qquad \frac{8}{6}19^{m-a-1},
        \qquad \frac{2}{6}19^{m-a-1},                            \label{eq:q19firstdiv}
\end{equation}
which are not even integral in general.  The BCH correction in the earlier
bound repaired integrality but did not account for the full orbit geometry.
The exact formula is integral because
\begin{equation}
        \frac{q-1}{r}(q-b+1)
        =s(q-b+1)                                             \label{eq:integralexplain}
\end{equation}
and $s$ is an integer.

\section{Minimum-Support Geometry After Quotienting}
Let $\Omega_r=(\Fq^m\setminus\{0\})/\mu_r$.  The quotient map has
fibers of size $r$, and the image of an affine support $S$ is denoted
\begin{equation}
        \overline S=S/\mu_r\subseteq\Omega_r.             \label{eq:Sbar}
\end{equation}
For the attaining support $S_{a,b}=\Supp(F_{a,b})$,
\begin{equation}
        |\overline S_{a,b}|=\frac{q-1}{r}(q-b+1)q^{m-a-2}.       \label{eq:Sbarcount}
\end{equation}
The projection from $\Omega_r$ to projective space has fibers of size
$(q-1)/r$, so
\begin{equation}
        \Omega_r\longrightarrow\PP^{m-1}(\Fq)              \label{eq:QrtoP}
\end{equation}
is a regular covering of finite sets.  The projective image of
$S_{a,b}$ is
\begin{equation}
        \rho(S_{a,b})=
        \{[0:\cdots:0:U:V:Z]:(U,V)\notin\cup_{j=1}^b L_j\},     \label{eq:projimageSab}
\end{equation}
with cardinality
\begin{equation}
        |\rho(S_{a,b})|=(q-b+1)q^{m-a-2}.                      \label{eq:projimagecount}
\end{equation}
Multiplying by the covering degree $(q-1)/r$ recovers
\eqref{eq:Sbarcount}.  Hence the minimum word is projectively a punctured
line times an affine factor, lifted to the intermediate scalar quotient.

A hypothetical smaller quotient support would have size less than
\eqref{eq:Sbarcount}.  Its affine lift would have size less than
$d_2(\ell,m)$ and hence would be a first Reed--Muller support.  Projecting
that first support to $\PP^{m-1}$ produces an affine projective cylinder
whose missing directions have cardinality $b$ but must contain the
projective image of the origin fiber in the affine chart.  The same
congruence contradiction appears after lifting the chart coordinate to
$\Fq$:
\begin{equation}
        b=1+r\rho
        \quad\text{and}\quad b=r-1+r\rho'                     \label{eq:tworhos}
\end{equation}
for some integers $\rho,\rho'$, impossible for $r>2$.

\section{Implications for Tables of Best Known Codes}
The exact minimum distance can be inserted directly into code tables.  Let
\begin{equation}
        [n,k,d]=\left[\frac{q^m-1}{r},K(q,m,r,\ell),
        \frac{q-1}{r}(q-b+1)q^{m-a-2}\right]                   \label{eq:tableparam}
\end{equation}
be a non-terminal parameter set.  The relative distance and rate are
\begin{IEEEeqnarray}{rCl}
        \delta=\frac{d}{n}
        &=&\displaystyle\frac{(q-1)(q-b+1)q^{m-a-2}}{q^m-1},\nonumber\\
        R&=&\displaystyle \frac{K(q,m,r,\ell)}{n}.               \label{eq:rateDelta}
\end{IEEEeqnarray}
For fixed $q,r,a,b$ and $m\to\infty$,
\begin{equation}
        \delta=(q-1)(q-b+1)q^{-a-2}+O(q^{-m}),                 \label{eq:deltalim}
\end{equation}
while $R$ is determined by the monomial count in the original paper.  The
new theorem therefore changes only the vertical coordinate $\delta$ in
asymptotic plots; it does not alter the dimension count.

The improvement over the old lower bound in relative terms is
\begin{equation}
        \frac{\kappa}{n}
        =\frac{(b-1)q^{m-a-2}-(r-2)}{q^m-1}.                    \label{eq:relativeimprovement}
\end{equation}
For fixed $a,b,r$ this tends to
\begin{equation}
        (b-1)q^{-a-2}.                                        \label{eq:relativeimprovementlim}
\end{equation}
Thus the correction is visible at fixed order $a$ and is not a vanishing
finite-length artifact.  This is one reason the open problem is structural
rather than numerical: the true distance sits on a different Reed--Muller
weight layer.

\section{A Final Algebraic Check of the Attaining Degree}
We close the proof body by verifying explicitly that no hidden reduction
modulo $X_i^q-X_i$ changes the degree or residue class of the attaining
word.  In the reduced algebra $\calR_m$, the relation
\begin{equation}
        X_i^q=X_i                                                   \label{eq:reduceXi}
\end{equation}
implies that powers should be reduced by
\begin{equation}
        X_i^e\equiv
        \begin{cases}
        1,&e=0,\\
        X_i^{1+((e-1)\bmod(q-1))},&e>0,
        \end{cases}                                               \label{eq:reducerule}
\end{equation}
when evaluating on $\Fq$.  The monomials in $F_{a,b}$ have exponents only
in the set
\begin{equation}
        \{0,q-1\}^{a}\times\{(u,b-u):0\le u\le b\}\times\{0\}^{m-a-2}.
                                                                    \label{eq:exponentsetFab}
\end{equation}
Since $b\le q-2$, every exponent in the binary factor is at most $q-2$.
Thus \eqref{eq:reducerule} leaves all exponents in \eqref{eq:exponentsetFab}
unchanged.  The total degree of any nonzero expanded monomial is therefore
exactly
\begin{equation}
        |I|(q-1)+b,
        \qquad I\subseteq\{1,
        \ldots,a\}.                                             \label{eq:exactFabdegree}
\end{equation}
Consequently the largest degree is $a(q-1)+b=\ell$, and the residue is
\begin{equation}
        |I|(q-1)+b\equiv b\equiv r-1\pmod r.                   \label{eq:exactFabresidue}
\end{equation}
This confirms that the upper word belongs to the precise defining space,
not just to a larger generalized Reed--Muller code.

The same check proves that scalar multiplication by $\alpha\in\mu_r$ acts
on the word by a nonzero scalar:
\begin{IEEEeqnarray}{rCl}
F_{a,b}(\alpha X)
&=&\prod_{i=1}^{a}(1-(\alpha X_i)^{q-1})
   \prod_{j=1}^{b}(\alpha X_{a+1}-\theta_j\alpha X_{a+2})
   \nonumber\\
&=&\alpha^b F_{a,b}(X)
 =\alpha^{r-1}F_{a,b}(X).                                      \label{eq:Fabscalarcheck}
\end{IEEEeqnarray}
The factors $1-X_i^{q-1}$ are invariant under all of $\Fq^\ast$, while
the binary factor has the desired character.  Hence the support is stable
under $\mu_r$ and each nonzero orbit contributes one coordinate of the
constacyclic word.

\section{What Remains Open Outside This Problem}
The theorem determines the minimum distance in the stated intermediate
family.  It does not attempt to determine every higher weight.  Let
\begin{equation}
        A_w(q,m,r,\ell)=|
        \{c\in C(q,m,r,\ell):\wt(c)=w\}|                       \label{eq:Aw}
\end{equation}
be the weight distribution.  Our proof identifies the first nonzero value
of $w$ and gives a large class of words attaining it, but it does not give
closed formulas for
\begin{equation}
        A_d,
        \quad A_{d+1},
        \quad\ldots,                                           \label{eq:weightenumopen}
\end{equation}
where $d$ is the exact distance.  The natural next invariant is the second
constacyclic weight
\begin{equation}
        d^{(2)}(C)=\min\{w>d:A_w(q,m,r,\ell)>0\}.               \label{eq:secondconstaweight}
\end{equation}
A complete determination of $d^{(2)}(C)$ would require the third and
higher affine Reed--Muller weights together with residue-layer orbit
obstructions.  The present method suggests the following filtration:
\begin{IEEEeqnarray}{rCl}
        d^{(j)}(C)
        &=& \frac{1}{r}\min\{W:W\in\mathcal W_j(\RM_q(\ell,m)), \nonumber\\
        &&\hspace{7mm}\text{there is a }\mu_r\text{-stable support} \nonumber\\
        &&\hspace{7mm}\text{of residue }r-1\text{ and size }W\}.
                                                                        \label{eq:higherfilter}
\end{IEEEeqnarray}
where $\mathcal W_j$ denotes the affine Reed--Muller weight set ordered
with multiplicity removed.  For $j=1$ this minimum skips $d_1$ and lands
on $d_2$, which is exactly the theorem.

Another independent direction is the automorphism group.  The quotient
support construction shows that the displayed minimum supports are
stabilized by at least the subgroup preserving the data
\begin{equation}
        W\subset Y\subset\Fq^m,\qquad Y=W\oplus Z,
        \qquad \dim W=2,\qquad \dim Y=m-a,                    \label{eq:flagauto}
\end{equation}
whose induced action on $W$ stabilizes the chosen $b$-set
$\Theta\subset\PP(W)$ and whose action on $Z$ is arbitrary invertible.
Equivalently, this is a block-diagonal stabilizer on the decomposition
$Y=W\oplus Z$, together with arbitrary ambient extensions preserving $Y$.
A group preserving only the flag $W\subset Y$ need not preserve the support,
because off-diagonal maps from $Z$ to $W$ can move the $W$-coordinate into
one of the deleted projective directions.  Determining whether every
minimum word is equivalent to the pencil construction would amount to
proving that every second-weight support in the residue layer is of the
form \eqref{eq:coordsupport}.  That classification is plausible, but it is
strictly stronger than the minimum-distance result proved here.

\section{Numerical Sanity Checks From the Formula}
The formula also gives immediate consistency checks against the elementary
bounds for a linear code.  Since $b\equiv r-1\pmod r$ and $r>2$, one has $2\le b\le q-2$.
Thus the factor $q-b+1$ satisfies
\begin{equation}
        3\le q-b+1\le q-1.                                    \label{eq:factorrange}
\end{equation}
Thus in the non-terminal range
\begin{equation}
        \frac{3(q-1)}{r}q^{m-a-2}
        \le d(C(q,m,r,\ell))
        \le \frac{(q-1)^2}{r}q^{m-a-2}.                           \label{eq:drange}
\end{equation}
The lower endpoint is attained at $b=q-2$, and the largest distance
in the residue class is obtained at $b=r-1$.  In terms of the length
$n=(q^m-1)/r$, this gives
\begin{equation}
        \frac{d}{n}
        = \frac{(q-1)(q-b+1)q^{m-a-2}}{q^m-1}
        < (q-b+1)q^{-a-1}.                                    \label{eq:reldupper}
\end{equation}
For $a=0$ the distance is on the order of $n/q$, while for fixed $a$ it is
on the order of $n/q^{a+1}$.  This agrees with the intuition that increasing
$\ell$ by a full block $q-1$ imposes one additional affine zero coordinate
in the attaining construction.

The integrality of every displayed distance is automatic but worth making
explicit.  Since $r\mid(q-1)$,
\begin{equation}
        \frac{q-1}{r}\in\mathbb Z,                              \label{eq:sint}
\end{equation}
and the non-terminal formula is integral.  In the terminal case,
\begin{equation}
        q-b+r-2\equiv 1-(r-1)+r-2\equiv0\pmod r,               \label{eq:terminalintegral}
\end{equation}
so \eqref{eq:terminal} is integral as well.  Finally, the exact distance
is always at most the length because
\begin{equation}
        \frac{q-1}{r}(q-b+1)q^{m-a-2}
        \le \frac{(q-1)^2}{r}q^{m-2}
        < \frac{q^m-1}{r}                                      \label{eq:dlessthanlength}
\end{equation}
for $m\ge2$ and the nontrivial parameter range.  These elementary checks
are independent of the proof but useful for detecting transcription errors
in parameter tables.

\section{Edge Values of the Deleted Pencil}
The two extreme admissible values of $b$ give a compact check on the
geometry.  At the low end $b=r-1$, the deleted projective set has the
minimum size allowed by the residue class.  The distance is
\begin{equation}
        d_{\max}(a)=
        \frac{q-1}{r}(q-r+2)q^{m-a-2}.                         \label{eq:dmaxa}
\end{equation}
This is the largest distance in the $a$th degree block, because the pencil
leaves $q-r+2$ projective directions.  At the high end, write
$q-2=r(\nu-1)+r-1$ with $\nu=(q-1)/r$.  Then $b=q-2$ and
\begin{equation}
        d_{\min}(a)=
        \frac{q-1}{r}\cdot 3 q^{m-a-2}=3\nu q^{m-a-2}.          \label{eq:dmina}
\end{equation}
Between these endpoints the exact distance is the arithmetic progression
\begin{equation}
        d_h(a)=\nu(q-rh-r+2)q^{m-a-2},
        \qquad h=0,1,\ldots,\nu-1,                             \label{eq:dhprogression}
\end{equation}
with common difference
\begin{equation}
        d_h(a)-d_{h+1}(a)=\nu r q^{m-a-2}=(q-1)q^{m-a-2}.       \label{eq:commondifference}
\end{equation}
Thus the distance profile in each degree block is completely rigid.  The
profile is not a consequence of BCH root spacing; it is the profile of a
line in $\PP^1(\Fq)$ from which admissible packets of $r$ points are
deleted.

The same endpoint analysis also verifies the previously solved case in the
source paper.  If $a=m-2$ and $b=r-1$, then \eqref{eq:dmaxa} reduces to
\begin{equation}
        d=\frac{q-1}{r}(q-r+2),                                \label{eq:sourceendpointagain}
\end{equation}
which is the equality case where the earlier lower and upper bounds meet.
For every other admissible $b$ in the same block,
\begin{equation}
        b=r-1+rh,
        \qquad h\ge1,                                         \label{eq:bother}
\end{equation}
the new result gives
\begin{equation}
        d=\frac{q-1}{r}(q-rh-r+2),                              \label{eq:newblockvalues}
\end{equation}
whereas the old lower bound is smaller by
\begin{equation}
        \kappa=\frac{rh}{r}=h.                                \label{eq:blockkappa}
\end{equation}
This is exactly the number of quotient-orbit units missing from the
old lower bound in the boundary block $a=m-2$.
This final endpoint check is often the easiest way to identify the theorem
inside numerical tables.

\section{Conclusion}
We determined the minimum distance of the intermediate constacyclic codes
$C(q,m,r,\ell)$ introduced by Sun, Ding and Wang for every
prime power $q$, every $m\ge2$, and every divisor $r\mid(q-1)$ with
$2<r<q-1$.  Writing
\[
        \ell=(q-1)a+b,\qquad 0\le b\le q-2,\qquad
        b\equiv r-1\pmod r,
\]
the exact value over the admissible range \eqref{eq:ellrange} is
\begin{equation}
        d(C(q,m,r,(q-1)a+b))=
        \begin{cases}
        \displaystyle \frac{q-1}{r}(q-b+1)q^{m-a-2},
        &0\le a\le m-2,\\[1mm]
        \displaystyle \frac{q-b+r-2}{r},
        &a=m-1,
        \end{cases}                                           \label{eq:conclusion}
\end{equation}
where the terminal line is subject to the admissibility condition
$b\le q-3$.  The proof shows that, in the non-terminal range, the
congruence-restricted monomial space cannot contain an affine
Reed--Muller first-weight support; the second Reed--Muller weight is
therefore forced and is attained by a homogeneous pencil.  This settles
the minimum-distance open problem in the non-terminal intermediate range
and records the separate terminal value.

The proof is deliberately organized so that the new step is visible.  The
published lower bound sees only consecutive roots in the constacyclic
coordinate domain; the present proof passes to the affine evaluation model
and uses the scalar character of the residue layer.  In formula form, the
central replacement is
\begin{equation}
        \frac{(q-b)q^{m-a-1}-2}{r}+1
        \quad\rightsquigarrow\quad
        \frac{(q-1)(q-b+1)q^{m-a-2}}{r}.                 \label{eq:conclusionreplacement}
\end{equation}
The difference is not an artifact of estimation but the exact number of
quotient orbits lost when the first affine Reed--Muller layer is forbidden
by the congruence
\begin{equation}
        b\equiv r-1\pmod r
        \quad\text{versus}\quad
        |B|\equiv1\pmod r.                                    \label{eq:conclusioncongruence}
\end{equation}
The homogeneous pencil gives equality, so the result is best possible as a
minimum-distance statement.  Consequently the second class of constacyclic codes now has a closed
distance formula in the non-terminal intermediate range posed in the open
problem, together with the separate terminal endpoint, and no remaining
exceptional subrange for $2<r<q-1$.

\section*{Declaration of Generative AI and AI-Assisted Technologies in the Writing Process}
During the preparation of this work, the authors used DeepSeek to build a specialized agent for solving mathematical problems, which was employed to generate an initial proof of the main theorem. After using this tool, the authors reviewed and edited the content as needed and take full responsibility for the content of the published article.

% Author biographies are left as placeholders for final submission.
% \begin{IEEEbiographynophoto}{First Author}
% Biography text goes here.
% \end{IEEEbiographynophoto}
%
% \begin{IEEEbiographynophoto}{Second Author}
% Biography text goes here.
% \end{IEEEbiographynophoto}

\end{document}